\setlist[itemize]{left=5pt, topsep=1pt, partopsep=0pt, parsep=1pt, itemsep=1pt}  % use a smaller indent
\setlist[enumerate]{left=5pt, topsep=1pt, partopsep=0pt, parsep=1pt, itemsep=1pt}
\renewcommand\paragraph{\@startsection{paragraph}{4}{\z@}%
  {0.25ex plus 0.1ex minus 0.1ex}% space before
  {-0.5em}% space after (negative gives run-in style)
  {\normalfont\normalsize\bfseries}}
\newtheorem{theorem}{Theorem}[section]
\newtheorem{proposition}[theorem]{Proposition}
\newcommand{\Y}{\mathbf{Y}}
\newcommand{\y}{\mathbf{y}}
\newcommand{\I}{\mathbf{I}}
\newcommand{\thetab}{\boldsymbol{\theta}}
\newcommand{\epsilonb}{\boldsymbol{\epsilon}}
\newcommand{\etab}{\boldsymbol{\eta}}
\newcommand{\psib}{\boldsymbol{\psi}}
\newcommand{\mub}{\boldsymbol{\mu}}
\newcommand{\Sigmab}{\boldsymbol{\Sigma}}
\newcommand{\cmark}{\checkmark} % Tick
\newcommand{\xmark}{\ding{55}}  % Cross
\newcommand{\diff}{\mathrm{d}}
\title{Compositional amortized inference for large-scale hierarchical Bayesian models}
\author{Jonas Arruda \\
Bonn Center for Mathematical Life Sciences, \\
\& Life and Medical Sciences Institute (LIMES) \\
University of Bonn, Germany
\And
Vikas Pandey \\
Center for Modeling, Simulation, \\
\& Imaging in Medicine (CeMSIM)\\
Rensselaer Polytechnic Institute, NY, USA
\And
Catherine Sherry\\
Molecular and Cellular Physiology\\
Albany Medical College, NY, USA
\And
Margarida Barroso\\
Molecular and Cellular Physiology\\
Albany Medical College, NY, USA
\And
Xavier Intes \\
Center for Modeling, Simulation, \\
\& Imaging in Medicine (CeMSIM)\\
Rensselaer Polytechnic Institute, NY, USA
\And
Jan Hasenauer \\
Bonn Center for Mathematical Life Sciences, \\
\& Life and Medical Sciences Institute (LIMES)\\
University of Bonn, Germany
\And
Stefan T. Radev\\
Center for Modeling, Simulation, \\
\& Imaging in Medicine (CeMSIM)\\
Rensselaer Polytechnic Institute, NY, USA\\
%\texttt{radevs@rpi.edu} \\
}
\begin{document}

\maketitle

\begin{abstract}
Amortized Bayesian inference (ABI) with neural networks has emerged as a powerful simulation-based approach for estimating complex mechanistic models. 
However, extending ABI to hierarchical models, a cornerstone of modern Bayesian analysis, has been a major hurdle due to the need to simulate and process massive datasets. 
Our study tackles these challenges by extending compositional score matching (CSM), a divide-and-conquer strategy for Bayesian updating using diffusion models. 
We develop a new error-damping estimator to address previous stability issues of CSM when aggregating large numbers of data points. 
We first verified the numerical stability with up to 100,000 data points on a controlled benchmark. 
We then evaluated our method on a hierarchical AR model, achieving competitive performance to direct ABI baselines on smaller problem sizes while using less than one full model simulation for larger problem sizes. 
Finally, we address a large-scale inverse problem in advanced microscopy with over 750,000 parameters, demonstrating its relevance to real scientific applications.
\end{abstract}

\section{Introduction}
\label{sec:introduction}

Simulation-based inference \citep[SBI;][]{cranmer2020frontier} has entered a new era, leveraging deep learning advances to deliver markedly more efficient computational statistics.
Within this framework, amortized Bayesian inference \citep[ABI;][]{burkner2023some} now scales Bayesian analysis to high-dimensional, mechanistic models, driving state-of-the-art discoveries in fields as diverse as astrophysics \citep{dax2025real} and neuroscience \citep{tolley2024methods}.

The core idea of ABI is straightforward: train a conditional generative model on simulations from a parametric Bayesian model $p(\thetab, \Y)$ over parameters $\thetab$ and (potentially high-dimensional) observables $\Y$. The network can then obtain independent samples from the posterior $p(\thetab \mid \Y)$ in a fraction of the time required by gold-standard Markov chain Monte Carlo (MCMC) methods.  
And as simple benchmarking suites have already received extensive attention \citep{lueckmann2021benchmarking}, recent research increasingly turns to a more pressing challenge in Bayesian inference: affording amortized inference for \textit{hierarchical}, \textit{mixed-effects}, or \textit{multilevel models} \citep{rodrigues2021hnpe, heinrich2023hierarchical, arruda2024amortized, habermann2024amortized}.

Hierarchical models (HMs) are the default textbook choice in Bayesian modeling \citep{bda3, mcelreath2018statistical}.
They also drive advanced analyses at the frontier of life sciences, such as pharmacokinetics with large patient cohorts \citep{arruda2024amortized} or biomedical imaging \citep{wang2019nonparametric}.
However, their nested structure strains inference algorithms: standard MCMC rarely scales to large datasets \citep{Blei2017, margossian2023shrinkage}, and ABI faces both network design and simulation efficiency hurdles.
Crucially, direct ABI approaches for estimating HMs require exhaustive simulations for each training sample (cf.~\autoref{fig:overview}, 
\textit{left}). This renders existing ABI approaches impractical for many real-world hierarchical models, particularly those involving large datasets or expensive simulators.

To overcome this bottleneck, we extend \textit{compositional score matching} (CSM), a divide-and-conquer strategy originally introduced for Bayesian updating in exchangeable models \citep{GeffnerPap2023} and recently adapted to complete pooling \citep{linhart2024diffusion} and time-series models \citep{gloeckler2024compositional}.
Our method enables amortized inference of HMs \textit{without ever simulating multiple data groups or keeping the entire dataset in memory} (cf.~\autoref{fig:overview}, \textit{right}). Moreover, it affords modern score-based diffusion models \citep{SongErm2020a, SongSoh2021} that have already shown considerable potential in SBI \citep{sharrock2024sequential, gloeckler2024all}.

\begin{figure*}[!ht]
    \centering
    \includegraphics[width=\textwidth]{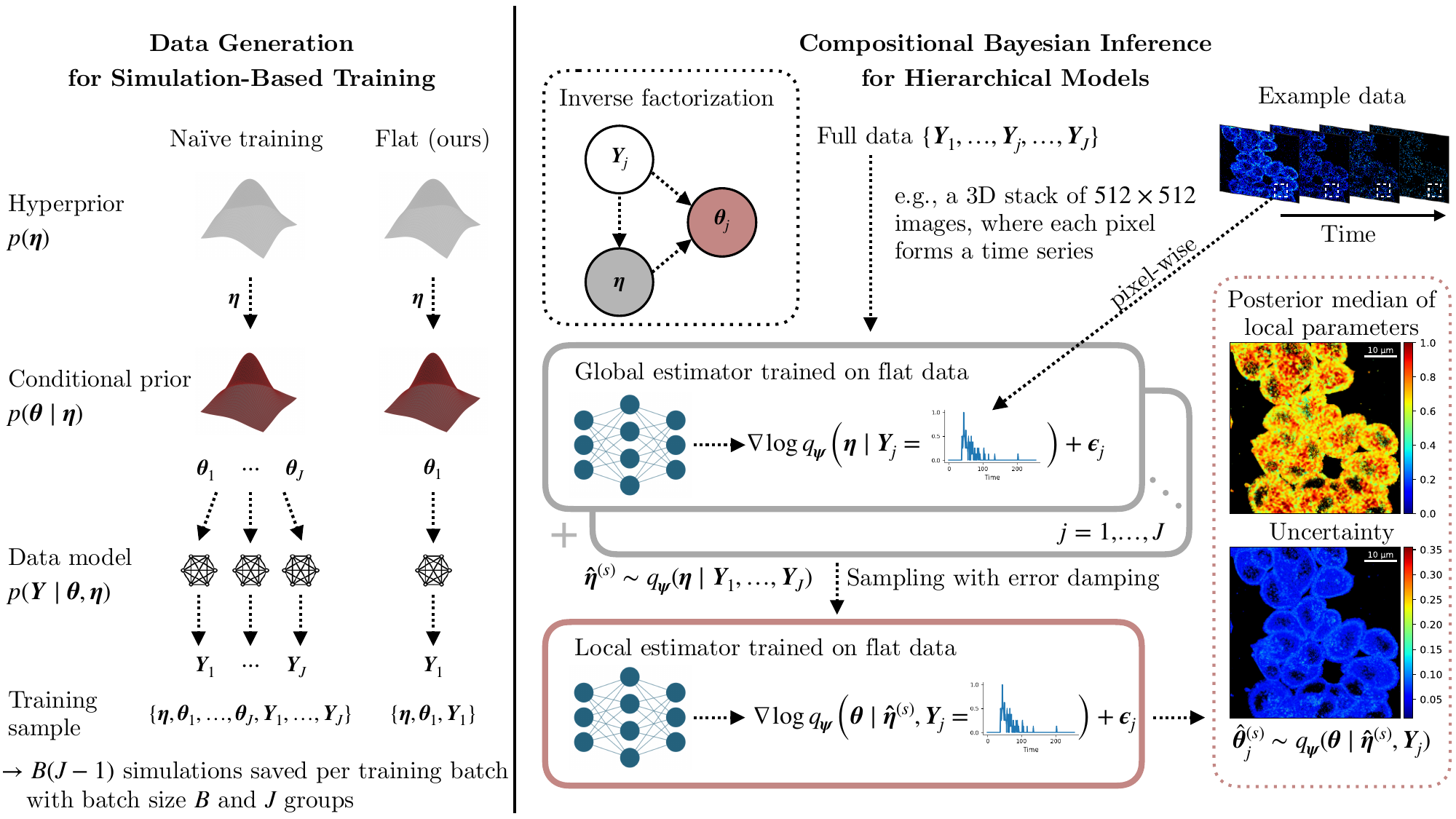}
    \caption{\emph{Compositional inference for hierarchical Bayesian models.} Overview of our training procedure (left) and inference stages (right) for amortized hierarchical Bayesian modeling.
    Amortized posterior sampling uses our error-damping compositional score estimator to achieve rapid inference on very high-dimensional hierarchical problems.}
    \label{fig:overview}
\end{figure*}

Despite the conceptual appeal of CSM, we observe that current aggregation methods fail, even for simple, non-hierarchical models, as the number of observations grows. Here, we show that these instabilities are due to compounding approximation errors and introduce a new compositional estimator that remains stable even in hierarchical models with more than 250,000 data groups and 750,000 parameters. Concretely, we develop and showcase
\begin{enumerate}
\item A new simulation-efficient method for estimating large hierarchical Bayesian models with diffusion models;
\item A stable reformulation of compositional score matching with stochastic differential equations (SDEs);
\item An error-damping mini-batch estimator that enables efficient scaling as the number of groups $J$ becomes very large (e.g., hundreds of thousands).
\end{enumerate}

\section{Background and related work}

\paragraph{Hierarchical Bayesian models} Hierarchical Bayesian models are the default choice to model dependencies in nested data, where observations are organized into clusters,  levels, or groups \citep{bda3}.
From a Bayesian perspective, any parametric data model $p(\Y \mid \thetab)$ can incorporate a multilevel structure via a hierarchical prior. For instance, a two-level model defines two stages
\begin{equation}\label{eq:forward_factorization}
    \Y_j \sim p(\Y_j \mid \thetab_j, \etab), \quad \thetab_j \sim p(\thetab \mid \etab), \quad \etab \sim  p(\etab),
\end{equation}
via a \textit{hyperprior} $p(\etab)$ encoding global variation between groups and a \textit{conditional prior} $p(\thetab \mid \etab)$ encoding local variation within groups.
The task of Bayesian estimation is to estimate the full posterior over local and global parameters:
\begin{equation}\label{eq:joint_posterior}
p(\etab, \thetab_{1:J} \mid \Y_{1:J}) \propto p(\etab) \prod_{j=1}^J p(\Y_j \mid \thetab_j) p(\thetab_j \mid \etab),
\end{equation}
where $J$ denotes the number of groups and the data model factorizes over $N_j$ observations within group $j$ as $p(\Y_j \mid \thetab_j) = \prod_{n = 1}^{N_j}p(\y_{j,n} \mid \thetab_j, \y_{j, 1:n-1})$.

The gold-standard approaches for estimating hierarchical models are Markov chain Monte Carlo (MCMC) methods \citep{gelman2020bayesian}. While MCMC methods offer strong theoretical guarantees, they are typically too slow for real-time or big data applications. Moreover, MCMC cannot be trivially applied to simulation-based models \citep{sisson2011likelihood}; hence, the appeal of amortized inference.

\paragraph{Amortized Bayesian inference (ABI)}

In ABI, a generative network learns a global posterior functional $\Y \mapsto q(\cdot \mid \Y)$. Typically, the network minimizes a strictly proper scoring rule $\mathcal{S}$ \citep{gneiting_probabilistic_2007} over a training budget of $B$ Monte Carlo samples from the joint model $p(\thetab, \Y)$:
\begin{equation}\label{eq:expected_score}
    \mathbb{E}_{p(\thetab, \Y)}\Big[\mathcal{S}(q(\cdot \mid \Y), \thetab)\Big] \approx \frac{1}{B}\sum_{b=1}^B \mathcal{S}(q(\cdot \mid \Y^{(b)}), \thetab^{(b)}).
\end{equation}
Using a universal density estimator for $q$, such as coupling flows \citep{draxler2024universality}, ensures that Eq.~\ref{eq:expected_score} can, in principle, converge to the correct target for a large simulation budget $B \rightarrow \infty$.
Since $\Y$ is typically high-dimensional, a summary network $h(\Y)$ can be jointly trained to learn data embeddings on the fly \citep{radev2020bayesflow} or implicitly incorporated into the architecture of $q$ \citep{gloeckler2024all}.
Crucially, $q$ repays users with zero-shot sampling for any new observation $\Y^{(\text{new})}$ compatible with $p(\thetab, \Y)$, making ABI an attractive avenue for efficient estimation of complex hierarchical models.

\paragraph{ABI for hierarchical models}
Previous work has already ported the basic idea of ABI to hierarchical settings \citep{habermann2024amortized, arruda2024amortized, heinrich2023hierarchical, rodrigues2021hnpe}.
These works leverage the inverse factorization of Eq.~\ref{eq:forward_factorization} in different ways to design hierarchical neural networks with inductive biases that capture the probabilistic symmetries (e.g., permutation invariance for exchangeable groups) of HMs.
However, these approaches only approximate parts of the joint posterior (Eq.~\ref{eq:joint_posterior}) or scale poorly even when the number of groups $J$ becomes moderately large.

Scalability issues arise since the expectation in Eq.~\ref{eq:expected_score} now runs over \smash{$p(\etab, \thetab_1,\dots,\thetab_J,\Y_1,\dots,\Y_J)$}, necessitating the simulation of \textit{a dataset of datasets} \smash{$\{\Y_1,\dots,\Y_J\}$} \textit{for each batch instance} (cf.~\autoref{fig:overview}, left). 
Even for $J\approx 1000$, a single training batch requires tens of thousands of simulations, quickly exceeding typical simulation budgets for non-trivial models.
Similar-sized problems can also become practically infeasible for established MCMC samplers \citep[e.g., NUTS,][]{hoffman2014no}, even for models with closed-form likelihoods (see \hyperref[sec:experiment2]{\textbf{Experiment 2}}).

Building on prior work by \citet{GeffnerPap2023, linhart2024diffusion, gloeckler2024compositional}, we address these efficiency issues in a ``divide-and-conquer'' manner via compositional score matching (CSM; cf.~\autoref{fig:overview}, \textit{right}). In addition, we introduce several key improvements to CSM in terms of stability and scalability. To the best of our knowledge, \textit{we provide the first amortized method capable of handling large-scale hierarchical Bayesian models with or without explicit likelihoods}.

\paragraph{Score matching} Score-based modeling \citep{SongErm2020a} and diffusion models \citep{HoJai2020a} provide a powerful framework for generative modeling by learning to reverse a noise-adding process.
Diffusion models build on a forward process that gradually corrupts a sample $\thetab$ into pure noise at each time step $t\in[0,1]$, taking the form
$\thetab_{t} = \alpha_{t} \thetab + \sigma_{t} \epsilonb_t$ with $\epsilonb_t \sim \mathcal{N}(\mathbf{0},\I)$.
The factors $\alpha_{t}$ and $\sigma_{t}$ are time-dependent functions that satisfy $\alpha_{t}^2 + \sigma_{t}^2 = 1$ for variance-preserving processes.
These functions can be parameterized by the log signal-to-noise ratio $\lambda_{t} = \log(\alpha_{t}^2 / \sigma_{t}^2)$, known as the \emph{noise schedule} \citep{KingmaGao2023}.

The conditional denoising score matching loss can be expressed in terms of an unconditional score \citep{vincent2011connection, LiYua2024}:
\begin{equation}
\min_{\psib}  
\mathbb{E}_{p(\thetab,\Y), t \sim \mathcal{U}(0,1), \epsilonb\sim\mathcal{N}(\mathbf{0},\I)} 
\mathbb{E} \left[w_{t}  \Vert s_{\psib}(\thetab_{t}, \Y, \lambda_t) - \nabla_{\thetab_t}\!\log p(\thetab_t \mid \thetab_0) \Vert_{2}^2\, \right].
\end{equation}
Intuitively, during training, the score at time $t$ is determined by the initial $\thetab_1$ and endpoint $\thetab_0$ alone and not the condition $\Y$. However, during inference, where the endpoint is not known, the condition $\Y$ allows us to steer towards an endpoint belonging to the posterior distribution.
Given a noise schedule, the unconditional score is analytically known as $\nabla_{\thetab_t}\!\log p(\thetab_t \mid \thetab_0)=-\epsilonb_t/\sigma_t$.
The weighting function $w_{t}>0$, often chosen to match the noise schedule $\lambda_{t}$ (see \citet{KingmaGao2023} for a detailed review of different weighting functions and noise schedules), is instantiated here as the likelihood weighting proposed by \citet{SongDur2021}.
Given a trained score model $\hat{s}(\thetab_t, \Y, \lambda_t)$, we can sample from the posterior distribution $p(\thetab \mid \Y)$ by integrating either the reverse stochastic differential equation \citep[SDE]{SongSoh2021} or the reverse probability flow, which enables posterior sampling using state-of-the-art SDE and ODE solvers (more details in Appendix~\ref{sec:sde-formulation}).
For an introduction to diffusion models in SBI, see \citet{simons2023neural}.
However, this formulation has neither been used for hierarchical modeling nor explored for compositional score matching in recent prior work \citep{GeffnerPap2023, linhart2024diffusion}, as discussed next.

\begin{table*}
    \centering
    \caption{Convergence of sampling methods for Gaussian toy example (\hyperref[subsec:gaussian_example]{\textbf{Experiment 1}}) with a maximum budget of 10,000 steps per sample (\cmark -- converges, \xmark -- fails).}
    \label{tab:sampling_convergence}
    \begin{tabular}{lcccc}
        \toprule
        Method & $N{=}10$ & $N{=}100$ & $N{=}10\text{k}$ & $N{=}100\text{k}$ \\
        \midrule
        Annealed Langevin sampler \citep{GeffnerPap2023} & \cmark & \xmark & \xmark & \xmark \\
        Euler-Maruyama sampler & \cmark & \xmark & \xmark & \xmark \\
        Probability ODE sampler & \cmark & \cmark & \xmark & \xmark \\
        Adaptive second-order sampler & \cmark & \cmark & \xmark & \xmark \\
        \texttt{GAUSS} \citep{linhart2024diffusion} & \cmark & \cmark & \xmark & \xmark \\
        \midrule
        Any sampler with damping (ours) & \cmark & \cmark & \cmark & \cmark \\
        Any sampler with schedule adjustment (ours) & \cmark & \cmark & \cmark & \cmark \\
        \bottomrule
    \end{tabular}
\end{table*}

\section{Method}

\subsection{Compositional score matching}
\label{sec:compositional_score_matching}
A major challenge in Bayesian inference arises when dealing with varying and potentially large numbers of observations, especially in hierarchical models.
To address this for non-hierarchical models, \citet{GeffnerPap2023} introduced \emph{compositional score matching} (CSM), which enables the aggregation of multiple conditionally independent score estimates into a global posterior estimate. 
In the following, we first introduce a naive extension of CSM for estimating the global parameters $\etab$ of hierarchical models. We then propose a solution to the stability problems of the naive approach that allows us to estimate the full joint posterior (Eq.~\ref{eq:joint_posterior}) of large hierarchical models.

\paragraph{Compositional score and bridging density} Suppose we have $J$ exchangeable groups of data points, $\smash{\Y_{1:J}}$.
Then, the compositional global posterior can be written as
\begin{equation}
p(\etab \mid \Y_{1:J})  \propto p(\etab)^{1-J} \prod_{j=1}^J p(\etab\mid \Y_{j})
\end{equation}
(see Appendix Proposition~\ref{prop:proof-factorization}).
Let $p_t(\etab_t\mid \Y_{j})$ be the time-varying density of the noise-corrupted parameter $\etab_t$ for $t\in[0,1]$, as defined by the forward diffusion process.
Then, we can define the bridging densities $p_t(\etab_t \mid \Y_{1:J}) \propto p(\etab_t)^{(1-J)(1-t)} \prod_{j=1}^J p_t(\etab_t\mid \Y_{j})$.
This results in a linear composition of the prior and individual posterior scores:
\begin{equation}\label{eq:bridging_density}
\nabla_{\etab_t}\!\log p_{t}(\etab_t\mid \Y_{1:J}) = (1-J)(1-t) \nabla_{\etab_t}\!\log p(\etab_t) + \sum_{j=1}^J \nabla_{\etab_t}\!\log p_{t}(\etab_t\mid \Y_{j}).
\end{equation}
After training, we can sample from the base distribution \smash{$p_{t=1}(\etab_t) = \mathcal{N}\left(\mathbf{0}, \frac{1}{J}\I \right)$} and use the compositional score to sample from the posterior distribution of $\etab$.
The score model can also be conditioned on $m$ groups jointly, rather than on a single group. This results in a compositional update that involves only $k=\lfloor{J / m }\rfloor$ scores per posterior evaluation, which can improve the robustness of the score estimation. However, this comes at an increased computational cost because each training iteration requires a batch simulation of $m$ full groups of data points rather than just one group.

\paragraph{Sampling with compositional scores} \citet{GeffnerPap2023} employ annealed Langevin sampling to invert the diffusion process for posterior inference, which needs many score evaluations for accurate inference \citep{Jolicoeur-MartineauLi2021} and is sensitive to the step-size at each sampling iteration. 
In contrast, \citet{linhart2024diffusion} used a second-order Gaussian approximation of the backward diffusion kernels to bypass Langevin sampling, introducing the need to approximate potentially large covariance matrices and limiting their experiments to only 100 observations.

In the remainder, we demonstrate that it is possible to instead leverage the SDE formulation by aggregating the compositional scores in the reverse SDE (see Appendix~\ref{sec:sde-formulation}) to sample from the posterior.
The rationale is that posterior samples can still follow the correct distribution, even if the stochastic process does not replicate the true time reversal of diffusion \citep{vuong2025we}.
Crucially, this allows the use of more efficient numerical solvers.

However, regardless of the number of conditioning groups $k$, increasing the number of score terms leads to error compounding due to a potential mismatch of marginal densities $p_t$ and the corresponding forward diffusion process, resulting in unstable dynamics and divergent samples (see \autoref{fig:adapative_step_size} in the Appendix).
Even higher-order solvers require extremely small step sizes, rendering them impractical even for moderately large problems (cf.~\autoref{tab:sampling_convergence}). 
The next section introduces three new modifications to the naive CSM approach that stabilize the bridging density (Eq.~\ref{eq:bridging_density}) and unlock unprecedented scalability to large datasets.

\subsection{Stabilizing and scaling compositional score matching}
\label{sec:minibatch_approximation}

In contrast to most previous work, we adopt the SDE formulation to perform compositional inference with adaptive solvers that automatically adjust the step size during integration. This modification is essential for larger numbers of groups $J$, where the need for finer granularity (i.e., smaller step sizes) increases and manual tuning becomes infeasible~\citep{Jolicoeur-MartineauLi2021}.
Moreover, it avoids the need for annealed Langevin sampling, which requires many steps per noise level and becomes prohibitively expensive when error correction is needed.

However, simply using an adaptive solver does not address the two major challenges for scaling the compositional approach to very large datasets: 1) the bridging densities (Eq.~\ref{eq:bridging_density}) become unstable as $J$ increases (see \autoref{tab:sampling_convergence}), and 2) memory requirements grow substantially when accumulating scores over the full dataset.

\paragraph{Flexible error-damping bridging densities}
We propose to stabilize the bridging density by introducing a damping factor of the accumulated score. Yet, naively applying a damping factor to the compositional score to prevent it from diverging would bias the posterior samples.
Instead, to mitigate the instability at large $J$, we introduce a more flexible class of error-damping bridging densities:
\begin{equation}\label{eq:error-damping}
p_t(\etab_t \mid \Y_{1:J}) \propto
p(\etab_t)^{ (1-J)(1-t)d(t)} \prod_{j=1}^J p_t(\etab_t \mid \Y_{j})^{d(t)},
\end{equation}
where $d(0) = d_0 = 1$ and $d(1) = d_1 \leq 1$, and the latent diffusion prior is
\smash{$p_{t=1}(\etab_1) = \mathcal{N}(\mathbf{0}, \frac{1}{J d_1} \I)$}.

The key idea is to define a monotonic function $d(t)$ that modulates the accumulation of score contributions throughout the diffusion trajectory \textit{during inference}.
In high-noise regimes, we reduce the influence of the individual terms to prevent the score from diverging, whereas for $t \to 0$, we allow their contributions to accumulate, recovering the true posterior.
This construction was motivated by the observation that adaptive solvers require smaller steps in high-noise regimes to avoid numerical instability (see Appendix~\autoref{fig:adapative_step_size}).
As a damping schedule, we propose an exponential decay $d(t) = d_0 \cdot \exp(-\ln(d_0 / d_1) \cdot t)$ with $d_0 = 1$ and a hyperparameter $d_1$ that can be tuned during inference.

\begin{tcolorbox}[colback=gray!5,colframe=gray!60,title={Post-publication note}]
While Eq.~\ref{eq:error-damping} introduces damping to improve stability, we found that a primary source of numerical instability in practice is the latent prior 
$p_{t=1}(\etab_1) =\mathcal{N}(\mathbf{0}, \frac{1}{J}\I)$, originally proposed by \cite{GeffnerPap2023} for annealed Langevin sampling.
In our experiments, the choice $d(1)=\frac{1}{J}$ counteracts this prior scaling.
Using the standard initialization $\mathcal{N}(\mathbf{0}, \I)$ removes the need for such compensation and substantially improves stability in the SDE solver.
With this modification and using an adaptive SDE solver, the damping schedule $d(t)$ is no longer essential for preventing divergence. Instead, choosing $d_1 < 1$ (and $d_0 < 1$) can improve calibration by partially mitigating compositional error.
\end{tcolorbox}

\paragraph{Mini-batch estimation for memory efficiency}
To address memory constraints in large-data scenarios, we propose a mini-batch estimator for the compositional score:
\begin{equation}\label{eq:mini-batch}
\hat{s}_{\psib}(\etab_t, \Y_{1:J}, \lambda_t) = (1 - J)(1 - t) \nabla_{\etab_t}\!\log p(\etab_t)
+ \frac{J}{M} \sum_{i=1}^M s_{\psib}(\etab_t, \Y_{j_i}, \lambda_t),
\end{equation}
where $j_i \sim \mathcal{U}\{1, \ldots, J\}$ and $M$ is the mini-batch size.

\begin{proposition}
The mini-batch estimator (Eq.~\ref{eq:mini-batch}) is an unbiased estimator of the compositional score.
\end{proposition}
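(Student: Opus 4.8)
The plan is to establish unbiasedness with respect to the randomness induced by the mini-batch index draw $j_1,\dots,j_M \stackrel{\text{iid}}{\sim} \mathcal{U}\{1,\dots,J\}$, holding $\etab_t$, the time $t$, the trained weights $\psib$, and the full dataset $\Y_{1:J}$ fixed. The target quantity is the full compositional score of Eq.~\ref{eq:bridging_density} written with the trained model, namely $(1-J)(1-t)\,\nabla_{\etab_t}\log p(\etab_t) + \sum_{j=1}^J s_{\psib}(\etab_t,\Y_j,\lambda_t)$, and I would show that $\mathbb{E}[\hat{s}_{\psib}]$ recovers exactly this expression.

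The computation itself is short. Since the prior term $(1-J)(1-t)\,\nabla_{\etab_t}\log p(\etab_t)$ contains no random indices, it is deterministic under the mini-batch draw and passes through the expectation unchanged. For the stochastic sum I would apply linearity of expectation, pulling the factor $J/M$ outside, and then evaluate the expectation of a single term: because each index $j_i$ is uniform on $\{1,\dots,J\}$, we have $\mathbb{E}_{j_i}[s_{\psib}(\etab_t,\Y_{j_i},\lambda_t)] = \frac{1}{J}\sum_{j=1}^J s_{\psib}(\etab_t,\Y_j,\lambda_t)$. Summing over the $M$ identical term-wise expectations and cancelling the prefactor $\frac{J}{M}\cdot M \cdot \frac{1}{J}$ leaves exactly $\sum_{j=1}^J s_{\psib}(\etab_t,\Y_j,\lambda_t)$, which combined with the deterministic prior term completes the argument.

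There is essentially no technical obstacle here; the result is an instance of the standard Monte Carlo fact that a uniformly sampled average, rescaled by the population size, is an unbiased estimator of a finite sum. The one point that warrants a careful statement is the scope of the expectation: unbiasedness holds pointwise in $\etab_t$ over the index randomness, and it does \emph{not} assert that $s_{\psib}$ equals the true individual posterior scores $\nabla_{\etab_t}\log p_t(\etab_t\mid\Y_j)$. Any approximation error in the learned score is inherited identically by the full-batch and mini-batch versions and is therefore irrelevant to the unbiasedness claim. I would also note that the iid (with-replacement) sampling assumption keeps the per-term expectations independent of $M$ and makes the linearity step immediate; sampling without replacement would still yield the same mean, but would require slightly more careful bookkeeping of the marginal inclusion probabilities to reach the same conclusion.
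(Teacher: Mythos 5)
Your argument is correct and matches the paper's proof in Appendix~\ref{sec:proof-unbiased} essentially line for line: linearity of expectation, the uniform-index expectation $\frac{1}{J}\sum_{j=1}^J s_{\psib}(\etab_t,\Y_j,\lambda_t)$ per term, cancellation of the $\frac{J}{M}\cdot M\cdot\frac{1}{J}$ prefactor, and the deterministic prior term passing through unchanged. Your added remarks on the scope of the expectation and on with- versus without-replacement sampling are sound clarifications but do not change the route.
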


For a short proof, see Appendix~\ref{sec:proof-unbiased}.
Combining this estimator with the damping function yields our final compositional form:
\begin{equation}\label{eq:stable_bridge}
\hat{s}^d_{\psib}(\etab_t, \Y_{1:J}, \lambda_t) = d(t) \cdot \bigg((1 - J)(1 - t) \times 
     \nabla_{\etab_t}\!\log p(\etab_t) + \frac{J}{M} \sum_{i=1}^M s_{\psib}(\etab_t, \Y_{j_i}, \lambda_t)\bigg).
\end{equation}
This \emph{error-damping mini-batch estimator} scales well with increasing numbers of groups $J$ and maintains stability across the reverse diffusion process, as shown in our experiments.

\paragraph{Noise schedule adjustment for sampling}
Finally, we propose to use different noise schedules for training and inference. 
As discussed by \citet{karras2022elucidating} and \citet{KingmaGao2023}, the noise schedule plays a role akin to importance sampling during training. 
During inference, spending less time in the high-noise regime of the reverse process improves stability and allows for larger step sizes, which is particularly important in the large-$J$ setting. In the case of the cosine schedule 
$\lambda(t) = -2 \cdot \log(\tan(\pi t / 2)) + 2s$
proposed by \cite{nichol2021improved}, this can be easily achieved by increasing the shift parameter $s$, which effectively compresses the high-noise portion of the schedule.

\subsection{Compositional score matching for hierarchical modeling}
\label{sec:compositional_hierarchical_scm}
To employ our stable compositional formulation for simulation-efficient hierarchical Bayesian modeling, we represent the posterior at each hierarchical level with its own score estimator. The outputs of these score estimators are then connected via the inverse factorization (see~\autoref{fig:overview}). This design is similar to the frameworks introduced by \citet{habermann2024amortized} and \citet{heinrich2023hierarchical}, but avoids the need for hierarchical embeddings and exhaustive simulation. 
At higher levels of the hierarchy, we use our stable compositional formulation (Eq.~\ref{eq:stable_bridge}), \textit{enabling the training of the global score model on single groups}. 
For example, in a two-level model, we have
\begin{equation}\label{eq:compositional}
s_{\psib}^\text{local}(\thetab_{t,j}, \etab, \Y_j, \lambda_t) \approx \nabla_{\thetab_{t}}\!\log p_t(\thetab_{t,j} \mid \etab, \Y_j),\,\,
s_{\psib}^\text{global}(\etab_t, \Y_j, \lambda_t) \approx \nabla_{\etab_t}\!\log p_t(\etab_t \mid \Y_j).
\end{equation}

For each group, we can learn a shared summary representation $\mathbf{h}_j = h(\Y_j)$ via a summary network $h$. Either the raw data $\Y_j$ or its summary $\mathbf{h}_j$ is then used as input to both the global and local score-based models. The design of the summary $h$ should be adapted to the specific data modality (e.g., recurrent networks or transformers for time series). 
When conditioning on multiple groups, we encode exchangeability via a second summary network \citep[e.g., a DeepSet,][]{zaheer2017deep}, which aggregates individual summaries into a permutation-invariant global summary.

The global and local score networks can be trained jointly via denoising score matching objectives,
\begin{equation}
\min_{\psib} \mathbb{E}_{p(\thetab, \etab, \Y)} \mathbb{E}_{t\sim\mathcal{U}(0,1)} w_{t} \bigl[ \Vert \epsilonb + s_{\psib}^\text{local}(\thetab_{t}, \etab, \Y, \lambda_t)\sigma_t  \Vert_{2}^2
+  \Vert \epsilonb + s_{\psib}^\text{global}(\etab_{t}, \Y, \lambda_t)\sigma_t  \Vert_{2}^2\, \bigr],
\end{equation}
with $\etab_{t} = \alpha_{t} \etab + \sigma_{t} \epsilonb_t$ and $\thetab_{t} = \alpha_{t} \thetab + \sigma_{t} \epsilonb_t$, where $\epsilonb_t \sim \mathcal{N}(\mathbf{0},\I)$.
Having trained the score models, we can sample from the joint posterior via ancestral sampling:
\begin{align}
\etab &\sim q_{\psib}^\text{global}(\etab \mid \Y_{1:J}), \quad
\thetab_j \sim q_{\psib}^\text{local}(\thetab \mid \etab, \Y_j),
\end{align}
where we used the compositional score (Eq.~\ref{eq:stable_bridge}) to sample the global parameters and then the local parameters conditioned on the global sample using standard score-based diffusion.

\begin{figure*}
\centering
\includegraphics[width=\textwidth]{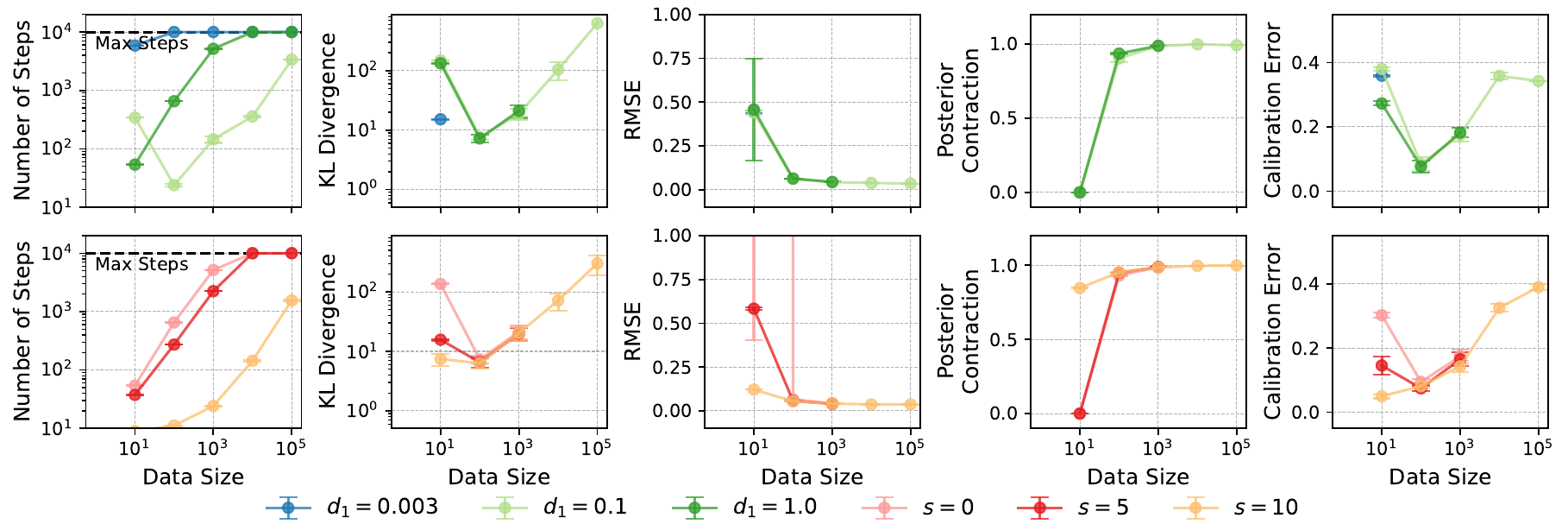}
\caption{\emph{Evaluation of the error-damping estimator for the Gaussian toy example.} Different evaluation metrics are shown for different dataset sizes and damping factors $d_1$ or cosine shifts $s$. The mini-batch size was set to 10\% of the dataset size, and for each step, 10 runs were performed. The median and median absolute deviation are reported, besides for those runs in which none converged.
}
\label{fig:gaussian}
\end{figure*}

\section{Experiments}
\label{sec:empirical_studies}

To systematically evaluate the proposed methods, we consider three case studies.
\begin{itemize}
\item \textbf{Gaussian toy example}: An analytically tractable Gaussian model with up to 100,000 synthetic data points, used to assess the accuracy and stability of compositional score estimation.
\item \textbf{Hierarchical time series model}: A grid of AR(1) processes with global and local parameters, used to evaluate hierarchical estimation against gold-standard MCMC.
\item \textbf{Real-world application}: Time-resolved Bayesian decay analysis in Fluorescence Lifetime Imaging (FLI), used to demonstrate scalability to high-dimensional hierarchical real data.
\end{itemize}

For the two synthetic examples, we assessed convergence across varying data sizes by recording the number of sampling iterations of the adaptive sampler. In the Gaussian toy example, we can calculate the KL divergence between the compositional and the true posteriors, relative mean squared error (RMSE) normalized by the known variance, posterior contraction, and calibration error (Appendix~\ref{subsec:metrics}). For the hierarchical models, we computed these metrics separately at both the global and local levels.
Appendix~\ref{sec:architectures} provides further details about the architecture.

\subsection{Experiment 1: Scaling and stabilizing CSM with error-damping}
\label{subsec:gaussian_example}

\paragraph{Setup and baseline} This first experiment serves both as a sanity check and as a demonstration of the stabilizing effects of our error-damping estimator, highlighting the accuracy and scalability of compositional score matching in a controlled setting.
We consider a Gaussian model of dimension $D{=}10$ with conditionally independent groups and a global latent variable (see Appendix~\ref{app:gaussian_toy}).
Since the posterior is analytically tractable, it enables exact measurement of accuracy and convergence.
We scale the number of observations up to 100,000 to test the effect of dataset size on the error accumulation of the individual scores. Below, we summarize our results and provide practical recommendations.

\paragraph{Damping factor} We find that the optimal damping factor $d_1$ depends on the number of composed groups: larger datasets require smaller damping factors for convergence (\autoref{fig:gaussian}). However, overly small factors can prevent posterior contraction, worsen calibration, and even hinder convergence. With an initial factor of $0.1$, we can successfully compose 100,000 scores. At this scale, the analytical posterior becomes nearly a point estimate, so even slight deviations in our estimate can significantly increase the KL-divergence, but the RMSE remains negligible. The damping factor is a tunable hyperparameter, and a value on the order of $1/\sqrt{n}$ often serves as a good starting point.

\paragraph{Mini-batching} Our mini-batch estimator alone reduces computational cost per sampling step but does not resolve instability due to score error accumulation, which prevents convergence beyond 1000 data points (see Appendix \autoref{fig:gaussian_addtional_experiemnts}). Using smaller batches instead of the full dataset lowers both the KL-divergence and posterior calibration error, albeit with a slight increase in RMSE. We attribute this to a smoothing effect on the accumulated score errors. In practice, we recommend using mini-batches of about 10\% of the data to balance accuracy and computational demands.

\paragraph{Noise schedule shifting} Adjusting the noise schedule improves stability and mitigates error accumulation (Appendix~\autoref{fig:gaussian_addtional_experiemnts}). A large shift of $s{=}10$ enables scaling to 100,000 groups and improves the KL-divergence, RMSE, and calibration error. As expected, both KL-divergence and calibration degrade with larger datasets due to increased error accumulation, but the shifted schedule helps to mitigate this effect. Moreover, a linear schedule appears suboptimal for compositional score matching, failing to converge even on smaller datasets (see Appendix \autoref{fig:gaussian_schedules}).

\paragraph{Number of conditions} Scaling to 100,000 groups also becomes feasible by conditioning the diffusion model on subsets of 100 groups (Appendix \autoref{fig:gaussian_addtional_experiemnts}). However, increasing the number of conditioning groups does not necessarily lead to better posterior contraction or lower RMSE. Notably, the number of conditions has to be chosen before training, and conditioning on more groups requires additional simulations, since each training sample incorporates multiple groups.
The choice of the number of groups per subset introduces a trade-off between scalability and accuracy. 
While larger subsets can reduce the variance in the compositional score estimation, they require more expressive networks to compose group-level information into accurate score estimates.
In practice, using a few conditions can yield performance gains without incurring major training costs.

In summary, our experiments with the analytically tractable Gaussian toy example demonstrate that the error-damping mini-batch estimator affords scalable compositional inference for up to 100,000 units of information.
While mini-batching alone is insufficient to ensure convergence, combining it with damping and noise schedule shifting reduces score accumulation errors and computational cost.

\subsection{Experiment 2: Scaling hierarchical inference}
\label{sec:experiment2}

\begin{figure*}[t!]
\centering
\includegraphics[width=\textwidth]{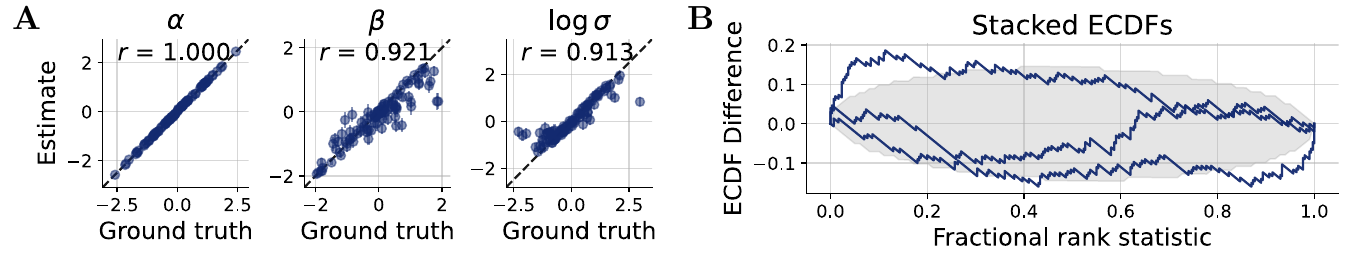}
\caption{
\emph{Assessing inference for high-resolution grids ($128 {\times} 128$).}  
\textbf{A} Global parameter recovery across 100 datasets, showing the posterior median and median absolute deviation.  
\textbf{B} Posterior calibration plot for the global parameters using SBC \citep{sailynoja2022graphical}.
}
\label{fig:ar1_scaling}
\end{figure*}

\paragraph{Setup and baseline} Our second experiment evaluates whether our approach can accurately infer the joint posterior for a non-trivial hierarchical Bayesian model.
We simulate a grid of local AR(1) processes with shared global drift and local variation parameters (see Appendix~\ref{app:ar1_model}).  
We increased the grid size up to $128 {\times} 128$ to test the scalability of the method, resulting in up to 16,384 local parameter vectors.
For this grid of AR(1) processes, a direct comparison to NUTS \citep[as implemented in Stan,][]{carpenter2017stan} is possible, which is widely regarded as the gold standard for Bayesian inference and provides the most reliable benchmark for evaluating how well our method captures the correct shrinkage in the local parameters.
Additionally, we benchmark against a direct hierarchical ABI approach \citep{habermann2024amortized, heinrich2023hierarchical} using a normalizing flow (ABI-NF) or a modernized version with a diffusion model (ABI-DM), sharing the same settings as ours and the same training budget.

\paragraph{Results} Our results support the earlier findings regarding the role of the damping factor: tuning the damping function is essential to balance posterior contraction and estimation error (Appendix \autoref{fig:ar1_experiemnts}); however, a too large cosine shift might hinder calibration.
Moreover, we find that neither the damping factor nor the cosine shift alone is sufficient to ensure convergence on high-resolution grids (e.g., $128 {\times} 128$), but their combination stabilizes inference (\autoref{fig:ar1_scaling}).
However, for these large-scale settings, achieving well-calibrated posteriors often comes at the cost of reduced accuracy in parameter recovery. This difficulty arises due to the strong contraction of the global posterior and compounding errors while solving the reverse SDE. As a result, calibration becomes challenging in the high-resolution regime.

In terms of precision, we observe that our method yields results comparable to those of NUTS at both the global and local parameter levels (\autoref{tab:stan_comparison}), with a slightly higher local RMSE (Appendix \autoref{tab:stan_comparison_local}). Crucially, our method scales effortlessly to significantly larger grid sizes, such as $128 {\times} 128$. In contrast, NUTS requires approximately 9 hours on a high-performance cluster with 64 CPU cores, whereas our likelihood-free approach \textit{completes inference within a few minutes on a single GPU}.
Moreover, already at a resolution of $32 {\times} 32$, posterior sampling with NUTS for 100 datasets takes a similar amount of time as training one score-based model and performing amortized inference.  

Furthermore, direct hierarchical approaches do not scale to larger grid sizes and already show worse precision at $16{\times} 16$ grids due to the need for larger training budgets (\autoref{tab:stan_comparison}). Only for a 10-fold increase in the training budget do they achieve similar accuracy.
Importantly, \textit{our total simulation budget amounts to less than a single grid of size} $128\times128$, \textit{rendering amortized methods that train on full grids completely infeasible with this low number of training samples}.

\begin{table}
    \centering
    \caption{Benchmarking against NUTS (gold-standard MCMC) and direct ABI methods for the hierarchical AR(1) model. The simulation budget is reported as the equivalent number of full grids. We show the median and median absolute deviation over 100 datasets for global parameters.}
    \label{tab:stan_comparison}
    \resizebox{\columnwidth}{!}{%
    \setlength{\tabcolsep}{3pt}
    \begin{tabular}{llcccccc}
        \toprule
        Grid size & Metric & NUTS & ABI-NF & ABI-DM & ABI-NF-10 & ABI-DM-10 & Ours \\
        \midrule
        $4{\times}4$ 
        & Simulation budget & - & 625 & 625 & - & - & 625 \\
        & RMSE & 0.07 (0.01) & 0.01 (0.01) & 0.08 (0.01) & - & - & 0.08 (0.01) \\
        & Contraction & 0.96 (0.03) & 0.87 (0.07) & 0.95 (0.03) & - & - & 0.96 (0.03) \\
        \midrule
        $16{\times}16$
        & Simulation budget & - & 40 & 40 & 400 & 400 & 40 \\
        & RMSE & 0.03 (0.01) & 0.19 (0.01) & 0.12 (0.01) & 0.09 (0.01) & 0.04 (0.01) & 0.08 (0.01) \\
        & Contraction & 1.00 (0.00) & 0.51 (0.13) & 0.71 (0.05) & 0.88 (0.04) & 0.99 (0.01) & 1.00 (0.00) \\
        \midrule
        $128{\times}128$
        & Simulation budget & - & - & - & - & - & ${<}1$ (!) \\
        & RMSE & 0.01 (0.01) & - & - & - & - & 0.09 (0.05) \\
        & Contraction & 1.0 (0.00) & - & - & - & - & 1.0 (0.01) \\
        \bottomrule
    \end{tabular}
    }%
\end{table}

\paragraph{Inference-time hyperparameter optimization} Importantly, because inference with our method is amortized, we can perform grid-based or even Bayesian hyperparameter optimization. We tune the damping factor and noise shift by selecting the best configuration based on the sum of the RMSE and calibration error.
To generalize beyond the proposed decay damping, we introduce the following flexible decay function:
$d(t) = d_0 + (d_1-d_0) \cdot \left(1-(1-t^\alpha)^\beta\right),$
which adds two hyperparameters ($\alpha$ and $\beta$) that enable smooth interpolation between linear, exponential, and cosine-like behaviors. 
We perform Bayesian optimization over $\alpha, \beta \in [0.3, 2]$, $d_1 \in [10^{-5}, 10^{-1}]$, and $d_0 \in [10^{-3}, 1]$.
This increases the runtime on a $32 \times 32$ grid from 3 to 7 minutes, primarily due to a reduction in early failures during sampling. 
The best configuration yields $d_1 = 0.005$, $d_0 = 0.94$, $s = 3.53$, $\alpha = 0.39$, and $\beta = 1.97$, suggesting that the learned schedule strongly favors a sharp, exponential-like decay (\autoref{fig:ar1_scaling}).

In summary, our experiment with the hierarchical AR(1) model revealed that compositional score matching, when combined with damping and noise schedule shifting, enables accurate and scalable inference in hierarchical models with thousands of groups. Even though NUTS is competitive on small grids, its cost and requirement for a tractable likelihood can make it impractical for estimating complex models from large datasets, whereas our compositional approach remains viable.

\subsection{Experiment 3: Application to fluorescence lifetime imaging (FLI)}
\label{subsec:fli_model}

\paragraph{Practical relevance} Our final experiment demonstrates the practical utility of our approach for real-world data, enabling scalable posterior estimation in fluorescence lifetime imaging (FLI), where existing methods struggle with noise and high dimensionality. FLI is an important tool in pre-clinical cancer imaging, particularly for \textit{in vivo} drug-target analysis \citep{verma2025fluorescence}.  
However, FLI remains challenging as it requires sub-nanosecond acquisition, computationally heavy pixel-wise curve fitting, and must deal with noisy decay profiles from low-quantum-yield dyes, leading to high uncertainty \citep{yuan2024quantifying, trinh2021biochemical}.
Bayesian approaches have been explored in prior work \citep{wang2019nonparametric, rowley2016robust}, but, to the best of our knowledge, we present the first application of a fully Bayesian hierarchical model to FLI data.

\begin{figure*}[t!]
    \centering
    \includegraphics[width=\textwidth]{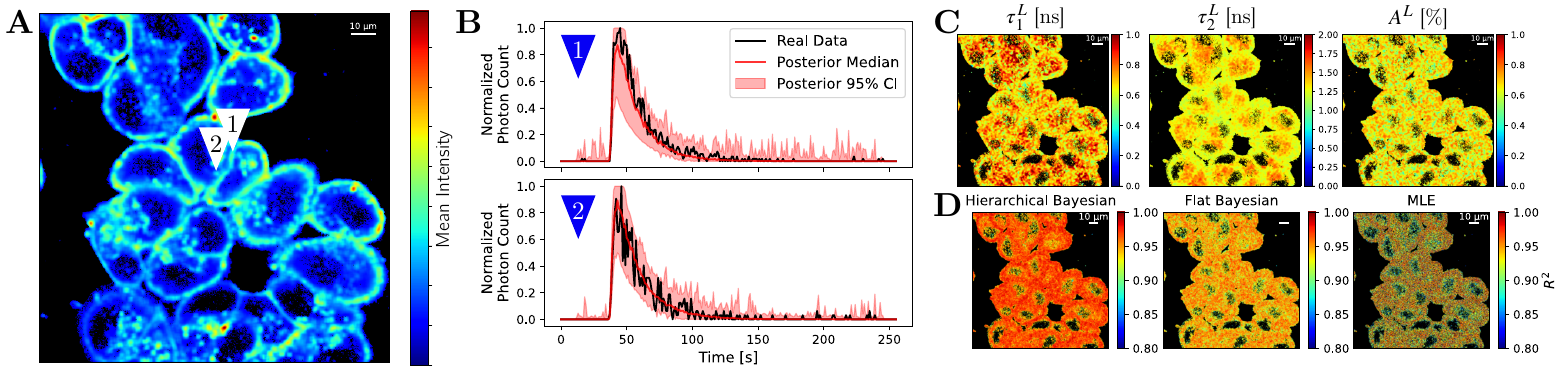}
    \caption{
\emph{Inference for fluorescence lifetime imaging.} 
\textbf{A} Mean intensity across time for each pixel, representing the fluorescence data.  
\textbf{B} Time series data and fitted posterior median for representative pixels.  
\textbf{C} Spatial map of the fitted local posteriors (medians) per pixel.
\textbf{D} Spatial map of $R^2$ for each pixel, comparing our results with a flat Bayesian model and a popular baseline (MLE).
}
\label{fig:fli_real}
\end{figure*}

\paragraph{Setup and baseline} We analyze time-resolved fluorescence decay data (\autoref{fig:fli_real}A-B), where each pixel in a measured series of $512 {\times} 512$ images is modeled using a bi-exponential decay with local decay rates $\tau_1^L$ and $\tau_2^L$ and mixture weights $A^L$. Each local parameter has a global mean and standard deviation, resulting in a hierarchical inference problem with over 250,000 groups (see Appendix~\ref{app:fli}).  
Unlike amortized methods, which train on full-image simulations to generalize across spatial structures, our approach trains on single pixels, \textit{requiring only the equivalent of 350 full images for training}. 
We compare our approach with the field’s gold standard method based on maximum likelihood estimation (MLE; assuming Gaussian noise to make MLE feasible) and a non-hierarchical diffusion model trained on single-pixel time series.

\paragraph{Results} To assess the performance of the baseline non-hierarchical approach and our proposed method, we first consider 100 held-out synthetic images. We find that per-pixel MLE fails to recover the ground truth due to photon-limited noise.
In contrast, our hierarchical approach accurately captures both global and local structures (Appendix \autoref{fig:fli_simulations}-\ref{fig:fli_simulations2}). Nevertheless, estimating global variances remains challenging under very high noise conditions.
Finally, we apply our method to real FLI data (Appendix~\ref{app:fli}). Using the trained score-based hierarchical model, we fit over $750,000$ local parameters efficiently (\autoref{fig:fli_real}C). 

Qualitatively, the inferred mean lifetime closely matches the standard MLE fit (Appendix~\autoref{fig:mle_estimate_real}). 
Our approach achieves excellent image-wide fits, with a mean $R^2{=}0.961$ (s.d., 0.017) for posterior predictive medians, versus $0.871$ (s.d., 0.110) for MLE and $0.921$ (s.d., 0.12) for the pixel-wise approach  (\autoref{fig:fli_real}D), as illustrated in \autoref{fig:fli_real}B. 
Across pixels, the mean posterior predictive $p$-value is 0.20 (s.d., 0.337), indicating slight underdispersion; masking the final third of the decay tail increases the mean $p$-value to 0.40 (s.d., 0.38), confirming that our model captures the core dynamics.

\section{Conclusion}

Hierarchical Bayesian models (HBMs) are of utmost importance in statistics, but their estimation remains challenging. Here, we demonstrated that compositional score matching (CSM) provides a scalable and flexible framework for estimating large HBMs. Moreover, we introduced an error-damping mini-batch estimator that resolves the inherent instability of CSM and scales up to hundreds of thousands of data points.
As a notable limitation, we observed that posterior calibration becomes difficult under extreme contraction, leaving room for further improvement.
Future work could further explore temporal aggregation \citep{gloeckler2024compositional}, systematically test the trade-off introduced by different damping schedules, refine mini-batch selection using informativeness criteria \citep{peng2019accelerating}, 
and generalize our experiments to more than two levels.

\section*{Acknowledgments}
This work was supported by the German Federal Ministry of Education and Research (BMBF) (EMUNE/031L0293C), the European Union via the ERC grant INTEGRATE, grant agreement number 101126146, and under Germany’s Excellence Strategy by the Deutsche Forschungsgemeinschaft (DFG, German Research Foundation) (EXC 2047—390685813, EXC 2151—390873048), the University of Bonn via the Schlegel Professorship of J.H, and the National Science Foundation under Grant No.~2448380.
J.A.\,thanks the Global Math Exchange Program of the Hausdorff Center for Mathematics for their financial support.
We also thank Niels Bracher for his helpful insights on diffusion models.
Views and opinions expressed are however those of the authors only and do not necessarily reflect those of the funding agencies.

\section*{Author CRediT}
% CRediT Taxonomy
J.A.: Conceptualization, Methodology, Software, Formal analysis, Validation, Visualization, Funding acquisition, Writing – original draft, Writing – review \& editing.
V.P.: Data curation, Software, Writing – review \& editing.
C.S.: Investigation, Data curation, Writing – review \& editing.
M.B.: Supervision, Investigation, Data curation, Writing – review \& editing.
X.I.: Supervision, Project administration, Resources, Writing – review \& editing.
J.H.: Supervision, Project administration, Funding acquisition, Writing – review \& editing.
S.T.R.: Conceptualization, Methodology, Software, Supervision, Resources, Funding acquisition, Writing – original draft, Writing – review \& editing.

\bibliography{references}

@inproceedings{GeffnerPap2023,
  title={Compositional score modeling for simulation-based inference},
  author={Geffner, Tomas and Papamakarios, George and Mnih, Andriy},
  booktitle={International Conference on Machine Learning},
  pages={11098--11116},
  year={2023},
  organization={PMLR}
}

@article{HoJai2020a,
  title={Denoising diffusion probabilistic models},
  author={Ho, Jonathan and Jain, Ajay and Abbeel, Pieter},
  journal={Advances in neural information processing systems},
  volume={33},
  pages={6840--6851},
  year={2020}
}

@article{sisson2011likelihood,
  title={Likelihood-free MCMC},
  author={Sisson, Scott A and Fan, Yanan},
  journal={Handbook of Markov Chain Monte Carlo},
  pages={313--335},
  year={2011},
  publisher={Chapman \& Hall/CRC, New York.[839]}
}

@inproceedings{rodrigues2021hnpe,
 author = {Rodrigues, Pedro and Moreau, Thomas and Louppe, Gilles and Gramfort, Alexandre},
 booktitle = {Advances in Neural Information Processing Systems},
 editor = {M. Ranzato and A. Beygelzimer and Y. Dauphin and P.S. Liang and J. Wortman Vaughan},
 pages = {13432--13443},
 publisher = {Curran Associates, Inc.},
 title = {HNPE: Leveraging Global Parameters for Neural Posterior Estimation},
 volume = {34},
 year = {2021}
}

@article{gneiting_probabilistic_2007,
	title = {Probabilistic {Forecasts}, {Calibration} and {Sharpness}},
	volume = {69},
	doi = {10.1111/j.1467-9868.2007.00587.x},
	number = {2},
	journal = {Journal of the Royal Statistical Society Series B: Statistical Methodology},
	author = {Gneiting, Tilmann and Balabdaoui, Fadoua and Raftery, Adrian E.},
	year = {2007},
	pages = {243--268},
}

@inproceedings{draxler2024universality,
  title={On the universality of volume-preserving and coupling-based normalizing flows},
  author={Draxler, Felix and Wahl, Stefan and Schn{\"o}rr, Christoph and K{\"o}the, Ullrich},
  booktitle={Proceedings of the 41st International Conference on Machine Learning},
  pages={11613--11641},
  year={2024}
}

@Article{radev2020bayesflow,
  author    = {Radev, Stefan T. and Mertens, Ulf K. and Voss, Andreas and Ardizzone, Lynton and Kothe, Ullrich},
  journal   = {IEEE Transactions on Neural Networks and Learning Systems},
  title     = {{BayesFlow}: Learning Complex Stochastic Models With Invertible Neural Networks},
  year      = {2020},
  month     = apr,
  number    = {4},
  pages     = {1452--1466},
  volume    = {33},
  publisher = {Institute of Electrical and Electronics Engineers (IEEE)},
}

@inproceedings{gloeckler2024all,
  title={All-in-one simulation-based inference},
  author={Gloeckler, Manuel and Deistler, Michael and Weilbach, Christian and Wood, Frank and Macke, Jakob H},
  booktitle={Proceedings of the 41st International Conference on Machine Learning},
  pages={15735--15766},
  year={2024}
}

@inproceedings{sharrock2024sequential,
  title={Sequential Neural Score Estimation: Likelihood-Free Inference with Conditional Score Based Diffusion Models},
  author={Sharrock, Louis and Simons, Jack and Liu, Song and Beaumont, Mark},
  booktitle={International Conference on Machine Learning},
  pages={44565--44602},
  year={2024},
  organization={PMLR}
}

@inproceedings{
simons2023neural,
title={Neural Score Estimation: Likelihood-Free Inference with Conditional Score Based Diffusion Models},
author={Jack Simons and Louis Sharrock and Song Liu and Mark Beaumont},
booktitle={Fifth Symposium on Advances in Approximate Bayesian Inference},
year={2023},
}

@article{linhart2024diffusion,
title={Diffusion posterior sampling for simulation-based inference in tall data settings},
author={Julia Linhart and Gabriel Cardoso and Alexandre Gramfort and Sylvain Le Corff and Pedro L. C. Rodrigues},
journal={Transactions on Machine Learning Research},
issn={2835-8856},
year={2026},
}

@inproceedings{
gloeckler2024compositional,
title={Compositional simulation-based inference for time series},
author={Manuel Gloeckler and Shoji Toyota and Kenji Fukumizu and Jakob H. Macke},
booktitle={The Thirteenth International Conference on Learning Representations},
year={2025},
}

@Article{Blei2017,
  author    = {David M. Blei and Alp Kucukelbir and Jon D. McAuliffe},
  journal   = {Journal of the American Statistical Association},
  title     = {Variational Inference: A Review for Statisticians},
  year      = {2017},
  issn      = {0162-1459},
  month     = apr,
  number    = {518},
  pages     = {859--877},
  volume    = {112},
  doi       = {10.1080/01621459.2017.1285773},
  publisher = {Informa UK Limited},
}

@inproceedings{margossian2023shrinkage,
  title={The shrinkage-delinkage trade-off: An analysis of factorized gaussian approximations for variational inference},
  author={Margossian, Charles C and Saul, Lawrence K},
  booktitle={Uncertainty in Artificial Intelligence},
  pages={1358--1367},
  year={2023},
  organization={PMLR}
}

@article{heinrich2023hierarchical,
title={Hierarchical Neural Simulation-Based Inference Over Event Ensembles},
author={Lukas Heinrich and Siddharth Mishra-Sharma and Chris Pollard and Philipp Windischhofer},
journal={Transactions on Machine Learning Research},
issn={2835-8856},
year={2024},
}

@article{habermann2024amortized,
  publtype={informal},
  author={Daniel Habermann and Marvin Schmitt and Lars Kühmichel and Andreas Bulling and Stefan T. Radev and Paul-Christian Bürkner},
  title={Amortized Bayesian Multilevel Models},
  year={2024},
  cdate={1704067200000},
  journal={CoRR},
  volume={abs/2408.13230},
}

@inproceedings{arruda2024amortized,
  title={An amortized approach to non-linear mixed-effects modeling based on neural posterior estimation},
  author={Arruda, Jonas and Sch{\"a}lte, Yannik and Peiter, Clemens and Teplytska, Olga and Jaehde, Ulrich and Hasenauer, Jan},
  booktitle={International Conference on Machine Learning},
  pages={1865--1901},
  year={2024},
  organization={PMLR}
}

@article{tolley2024methods,
  title={Methods and considerations for estimating parameters in biophysically detailed neural models with simulation based inference},
  author={Tolley, Nicholas and Rodrigues, Pedro LC and Gramfort, Alexandre and Jones, Stephanie R},
  journal={PLOS Computational Biology},
  volume={20},
  number={2},
  pages={e1011108},
  year={2024},
  publisher={Public Library of Science San Francisco, CA USA}
}

@article{dax2025real,
  title={Real-time inference for binary neutron star mergers using machine learning},
  author={Dax, Maximilian and Green, Stephen R and Gair, Jonathan and Gupte, Nihar and P{\"u}rrer, Michael and Raymond, Vivien and Wildberger, Jonas and Macke, Jakob H and Buonanno, Alessandra and Sch{\"o}lkopf, Bernhard},
  journal={Nature},
  volume={639},
  number={8053},
  pages={49--53},
  year={2025},
  publisher={Nature Publishing Group UK London}
}

@book{mcelreath2018statistical,
  title={Statistical rethinking: A {B}ayesian course with examples in R and Stan},
  author={McElreath, Richard},
  year={2018},
  publisher={Chapman and Hall/CRC}
}

@book{bda3,
  title = {{B}ayesian Data Analysis (3rd Edition)},
  publisher = {Chapman and Hall/CRC},
  year = {2013},
  author = {Gelman, Andrew and Carlin, John B and Stern, Hal S and Dunson, David B and Vehtari, Aki and Rubin, Donald
	B},
}

@inproceedings{lueckmann2021benchmarking,
  title={Benchmarking simulation-based inference},
  author={Lueckmann, Jan-Matthis and Boelts, Jan and Greenberg, David and Goncalves, Pedro and Macke, Jakob},
  booktitle={International conference on artificial intelligence and statistics},
  pages={343--351},
  year={2021},
  organization={PMLR}
}

@article{burkner2023some,
  title={Some models are useful, but how do we know which ones? Towards a unified {B}ayesian model taxonomy},
  author={B{\"u}rkner, Paul-Christian and Scholz, Maximilian and Radev, Stefan T},
  journal={Statistic Surveys},
  volume={17},
  pages={216--310},
  year={2023},
  publisher={The American Statistical Association, the Bernoulli Society, the Institute~…}
}

@article{cranmer2020frontier,
  title={The frontier of simulation-based inference},
  author={Cranmer, Kyle and Brehmer, Johann and Louppe, Gilles},
  journal={Proceedings of the National Academy of Sciences},
  volume={117},
  number={48},
  pages={30055--30062},
  year={2020},
  publisher={National Academy of Sciences}
}

@article{KingmaGao2023,
  title={Understanding diffusion objectives as the elbo with simple data augmentation},
  author={Kingma, Diederik and Gao, Ruiqi},
  journal={Advances in Neural Information Processing Systems},
  volume={36},
  pages={65484--65516},
  year={2023}
}

@inproceedings{SongSoh2021,
title={Score-Based Generative Modeling through Stochastic Differential Equations},
author={Yang Song and Jascha Sohl-Dickstein and Diederik P Kingma and Abhishek Kumar and Stefano Ermon and Ben Poole},
booktitle={International Conference on Learning Representations},
year={2021},
}

@article{SongErm2020a,
  title={Generative modeling by estimating gradients of the data distribution},
  author={Song, Yang and Ermon, Stefano},
  journal={Advances in neural information processing systems},
  volume={32},
  year={2019}
}

@article{SongDur2021,
  title={Maximum likelihood training of score-based diffusion models},
  author={Song, Yang and Durkan, Conor and Murray, Iain and Ermon, Stefano},
  journal={Advances in neural information processing systems},
  volume={34},
  pages={1415--1428},
  year={2021}
}

@article{LiYua2024,
  title={Diffusion model for data-driven black-box optimization},
  author={Li, Zihao and Yuan, Hui and Huang, Kaixuan and Ni, Chengzhuo and Ye, Yinyu and Chen, Minshuo and Wang, Mengdi},
  journal={arXiv preprint arXiv:2403.13219},
  year={2024}
}

@article{Jolicoeur-MartineauLi2021,
  title={Gotta go fast when generating data with score-based models},
  author={Jolicoeur-Martineau, Alexia and Li, Ke and Pich{\'e}-Taillefer, R{\'e}mi and Kachman, Tal and Mitliagkas, Ioannis},
  journal={arXiv preprint arXiv:2105.14080},
  year={2021}
}

@inproceedings{salimans2022progressive,
title={Progressive Distillation for Fast Sampling of Diffusion Models},
author={Tim Salimans and Jonathan Ho},
booktitle={International Conference on Learning Representations},
year={2022},
}

@article{gelman2020bayesian,
  title={{B}ayesian workflow},
  author={Gelman, Andrew and Vehtari, Aki and Simpson, Daniel and Margossian, Charles C and Carpenter, Bob and Yao, Yuling and Kennedy, Lauren and Gabry, Jonah and B{\"u}rkner, Paul-Christian and Modr{\'a}k, Martin},
  journal={arXiv preprint arXiv:2011.01808},
  year={2020}
}

@article{hoffman2014no,
  title={The No-U-Turn sampler: adaptively setting path lengths in Hamiltonian Monte Carlo.},
  author={Hoffman, Matthew D and Gelman, Andrew and others},
  journal={J. Mach. Learn. Res.},
  volume={15},
  number={1},
  pages={1593--1623},
  year={2014}
}

@article{karras2022elucidating,
  title={Elucidating the design space of diffusion-based generative models},
  author={Karras, Tero and Aittala, Miika and Aila, Timo and Laine, Samuli},
  journal={Advances in neural information processing systems},
  volume={35},
  pages={26565--26577},
  year={2022}
}

@INPROCEEDINGS{ZhangMik2023,
  author={Zhang, Yi and Mikelsons, Lars},
  booktitle={2023 IEEE/ASME International Conference on Advanced Intelligent Mechatronics (AIM)}, 
  title={Solving Stochastic Inverse Problems with Stochastic {BayesFlow}}, 
  year={2023},
  volume={},
  number={},
  pages={966-972},
}

@article{zaheer2017deep,
  title={Deep sets},
  author={Zaheer, Manzil and Kottur, Satwik and Ravanbakhsh, Siamak and Poczos, Barnabas and Salakhutdinov, Russ R and Smola, Alexander J},
  journal={Advances in neural information processing systems},
  volume={30},
  year={2017}
}

@inproceedings{nichol2021improved,
  title={Improved denoising diffusion probabilistic models},
  author={Nichol, Alexander Quinn and Dhariwal, Prafulla},
  booktitle={International conference on machine learning},
  pages={8162--8171},
  year={2021},
  organization={PMLR}
}

@article{peng2019accelerating,
  title={Accelerating minibatch stochastic gradient descent using typicality sampling},
  author={Peng, Xinyu and Li, Li and Wang, Fei-Yue},
  journal={IEEE transactions on neural networks and learning systems},
  volume={31},
  number={11},
  pages={4649--4659},
  year={2019},
  publisher={IEEE}
}

@article{carpenter2017stan,
  title={Stan: A probabilistic programming language},
  author={Carpenter, Bob and Gelman, Andrew and Hoffman, Matthew D and Lee, Daniel and Goodrich, Ben and Betancourt, Michael and Brubaker, Marcus and Guo, Jiqiang and Li, Peter and Riddell, Allen},
  journal={Journal of statistical software},
  volume={76},
  pages={1--32},
  year={2017}
}

@article{verma2025fluorescence,
  title={Fluorescence lifetime imaging for quantification of targeted drug delivery in varying tumor microenvironments},
  author={Verma, Amit and Pandey, Vikas and Sherry, Catherine and Humphrey, Taylor and James, Christopher and Matteson, Kailie and Smith, Jason T and Rudkouskaya, Alena and Intes, Xavier and Barroso, Margarida},
  journal={Advanced Science},
  volume={12},
  number={3},
  pages={2403253},
  year={2025},
  publisher={Wiley Online Library}
}

@article{pandey2025real,
  title={Real-time wide-field fluorescence lifetime imaging via single-snapshot acquisition for biomedical applications},
  author={Pandey, Vikas and Millar, Euan and Erbas, Ismail and Chavez, Luis and Radford, Jack and Crosbourne, Isaiah and Madhusudan, Mansa and Taylor, Gregor G and Yuan, Nanxue and Bruschini, Claudio and others},
  journal={PhotoniX},
  volume={6},
  number={1},
  pages={58},
  year={2025},
  publisher={Springer}
}

@article{sailynoja2022graphical,
  title={Graphical test for discrete uniformity and its applications in goodness-of-fit evaluation and multiple sample comparison},
  author={S{\"a}ilynoja, Teemu and B{\"u}rkner, Paul-Christian and Vehtari, Aki},
  journal={Statistics and Computing},
  volume={32},
  number={2},
  pages={32},
  year={2022},
  publisher={Springer}
}

@article{yuan2024quantifying,
  author       = {Yuan, Nanxue and Pandey, Vikas and Verma, Amit and Humphrey, Taylor and Barroso, Margarida and Intes, Xavier and Michalet, Xavier},
  title        = {Quantifying Lifetime Uncertainty in Gated-ICCD Fluorescence Lifetime Imaging},
  journal      = {Optica Open. Preprint},
  year         = {2024},
  doi          = {10.1364/opticaopen.28094186.v1},
}

@article{trinh2021biochemical,
  title={Biochemical resolving power of fluorescence lifetime imaging: untangling the roles of the instrument response function and photon-statistics},
  author={Trinh, Andrew L and Esposito, Alessandro},
  journal={Biomedical optics express},
  volume={12},
  number={7},
  pages={3775--3788},
  year={2021},
  publisher={Optical Society of America}
}

@article{pandey2024deep,
  title={Deep learning-based temporal deconvolution for photon time-of-flight distribution retrieval},
  author={Pandey, Vikas and Erbas, Ismail and Michalet, Xavier and Ulku, Arin and Bruschini, Claudio and Charbon, Edoardo and Barroso, Margarida and Intes, Xavier},
  journal={Optics letters},
  volume={49},
  number={22},
  pages={6457--6460},
  year={2024},
  publisher={Optica Publishing Group}
}

@article{torrado2024fluorescence,
  title={Fluorescence lifetime imaging microscopy},
  author={Torrado, Belen and Pannunzio, Bruno and Malacrida, Leonel and Digman, Michelle A},
  journal={Nature Reviews Methods Primers},
  volume={4},
  number={1},
  pages={80},
  year={2024},
  publisher={Nature Publishing Group UK London}
}

@article{smith2019fast,
  title={Fast fit-free analysis of fluorescence lifetime imaging via deep learning},
  author={Smith, Jason T and Yao, Ruoyang and Sinsuebphon, Nattawut and Rudkouskaya, Alena and Un, Nathan and Mazurkiewicz, Joseph and Barroso, Margarida and Yan, Pingkun and Intes, Xavier},
  journal={Proceedings of the national academy of sciences},
  volume={116},
  number={48},
  pages={24019--24030},
  year={2019},
  publisher={National Academy of Sciences}
}

@article{wang2019nonparametric,
  title={Nonparametric empirical {B}ayesian framework for fluorescence-lifetime imaging microscopy},
  author={Wang, Shulei and Chacko, Jenu V and Sagar, Abdul K and Eliceiri, Kevin W and Yuan, Ming},
  journal={Biomedical optics express},
  volume={10},
  number={11},
  pages={5497--5517},
  year={2019},
  publisher={Optical Society of America}
}

@article{rowley2016robust,
  title={Robust {B}ayesian fluorescence lifetime estimation, decay model selection and instrument response determination for low-intensity FLIM imaging},
  author={Rowley, Mark I and Coolen, Anthonius CC and Vojnovic, Borivoj and Barber, Paul R},
  journal={PLoS one},
  volume={11},
  number={6},
  pages={e0158404},
  year={2016},
  publisher={Public Library of Science San Francisco, CA USA}
}

@article{betancourt2015hamiltonian,
  title={Hamiltonian Monte Carlo for hierarchical models},
  author={Betancourt, Michael and Girolami, Mark},
  journal={Current trends in {B}ayesian methodology with applications},
  volume={79},
  number={30},
  pages={2--4},
  year={2015},
  publisher={CRC Press Boca Raton, FL}
}

@inproceedings{optuna_2019,
    title={Optuna: A Next-generation Hyperparameter Optimization Framework},
    author={Akiba, Takuya and Sano, Shotaro and Yanase, Toshihiko and Ohta, Takeru and Koyama, Masanori},
    booktitle={Proceedings of the 25th {ACM} {SIGKDD} International Conference on Knowledge Discovery and Data Mining},
    year={2019}
}

@article{boelts2025sbi,
  title={sbi reloaded: a toolkit for simulation-based inference workflows},
  author={Boelts, Jan and Deistler, Michael and Gloeckler, Manuel and Tejero-Cantero, {\'A}lvaro and Lueckmann, Jan-Matthis and Moss, Guy and Steinbach, Peter and Moreau, Thomas and Muratore, Fabio and Linhart, Julia and others},
  journal={Journal of Open Source Software},
  volume={10},
  number={108},
  pages={7754},
  year={2025}
}

@article{bayesflow_2023_software,
  title = {{BayesFlow}: Amortized {B}ayesian workflows with neural networks},
  author = {Radev, Stefan T. and Schmitt, Marvin and Schumacher, Lukas and Elsemüller, Lasse and Pratz, Valentin and Schälte, Yannik and Köthe, Ullrich and Bürkner, Paul-Christian},
  journal = {Journal of Open Source Software},
  volume = {8},
  number = {89},
  pages = {5702},
  year = {2023}
}

@article{durkan2019neural,
author = {Durkan, Conor and Bekasov, Artur and Murray, Iain and Papamakarios, George},
title = {Neural spline flows},
year = {2019},
journal={Advances in neural information processing systems},
volume={32},
}

@article{vuong2025we,
title={Are We Really Learning the Score Function? Reinterpreting Diffusion Models Through Wasserstein Gradient Flow Matching},
author={An Vuong and Michael Thompson McCann and Javier E. Santos and Yen Ting Lin},
journal={Transactions on Machine Learning Research},
issn={2835-8856},
year={2025},
}

@article{vincent2011connection,
  title={A connection between score matching and denoising autoencoders},
  author={Vincent, Pascal},
  journal={Neural computation},
  volume={23},
  number={7},
  pages={1661--1674},
  year={2011},
  publisher={MIT Press},
}

\appendix
\newpage
\section{Appendix}

\subsection{Stochastic differential equation formulation of the diffusion process}
\label{sec:sde-formulation}
The forward diffusion process for $t\in[0,1]$ can be specified as a stochastic differential equation \cite{SongSoh2021}:
\begin{equation*}
d\thetab_{t} = f(\thetab_{t}, t)\,\diff t + g(t)\, \diff \mathbf{W}_{t}.
\end{equation*}
For a known variance-preserving process, the drift and diffusion coefficients are given by
\begin{equation*}
f(\thetab, t) = -\tfrac{1}{2} \left( \tfrac{d}{dt} \log(1+e^{-\lambda_{t}})\right) \thetab, \qquad
g(t)^2 = \tfrac{d}{dt} \log (1+e^{-\lambda_{t}}),
\end{equation*}
with $\alpha_t^2 = \operatorname{sigmoid}(\lambda_t)$ and $\sigma_t^2 = \operatorname{sigmoid}(-\lambda_t)$ as discussed in \citep{KingmaGao2023}.
Time can be reversed via the reverse-time SDE
\begin{equation*}\label{eq:reverse_sde}
d\thetab_{t} = \left[f(\thetab_{t},t) - g(t)^2 \nabla_{\thetab_{t}}\!\log p_{t}(\thetab_{t} \mid \Y)\right]\,\diff t + g(t)\,\diff \mathbf{W}_{t},
\end{equation*}
which enables posterior sampling using state-of-the-art SDE solvers.
The corresponding probability ODE is 
\begin{equation*}
d\thetab_{t} = \left[f(\thetab_{t},t) - \frac12 g(t)^2 \nabla_{\thetab_{t}}\!\log p_{t}(\thetab_{t} \mid \Y)\right]\,\diff t.
\end{equation*}

\subsection{Factorization of the global posterior in hierarchical models}
\label{sec:proof-factorization}

\begin{proposition}\label{prop:proof-factorization}
Consider a two-level hierarchical model with global parameter $\etab$, local parameters $\thetab_j$ for $j = 1,\dots,J$, and i.i.d.\ observations $\Y_j$ for each group.
The marginal posterior of the global parameter satisfies
\begin{equation*}
p(\etab \mid \Y_{1:J}) \propto p(\etab)^{1-J} \prod_{j=1}^J p(\etab \mid \Y_j).
\end{equation*}
\end{proposition}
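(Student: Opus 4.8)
The plan is to reduce everything to repeated applications of Bayes' rule, keeping careful track of which factors are constant in $\etab$. The single structural fact I need is that the groups are conditionally independent given the global parameter, so that the likelihood factorizes as $p(\Y_{1:J}\mid\etab)=\prod_{j=1}^J p(\Y_j\mid\etab)$. This follows directly from the generative description in Eq.~\ref{eq:forward_factorization}: conditional on $\etab$, each local parameter $\thetab_j$ is drawn independently from $p(\thetab\mid\etab)$ and each $\Y_j$ depends only on its own $\thetab_j$, so marginalizing out the $\thetab_j$ yields independent factors $p(\Y_j\mid\etab)=\int p(\Y_j\mid\thetab_j)\,p(\thetab_j\mid\etab)\,\diff\thetab_j$.

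With this factorization in hand, I would first apply Bayes' theorem to the joint posterior and discard the evidence $p(\Y_{1:J})$, which is constant in $\etab$, giving
\begin{equation*}
p(\etab\mid\Y_{1:J})\propto p(\etab)\prod_{j=1}^J p(\Y_j\mid\etab).
\end{equation*}
The second step is to re-express each group likelihood in terms of the corresponding single-group posterior. Applying Bayes' theorem to group $j$ gives $p(\Y_j\mid\etab)=p(\etab\mid\Y_j)\,p(\Y_j)/p(\etab)$, and since $p(\Y_j)$ does not depend on $\etab$, we have $p(\Y_j\mid\etab)\propto p(\etab\mid\Y_j)/p(\etab)$ as a function of $\etab$.

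Substituting this into the product and collecting the $J$ copies of $p(\etab)^{-1}$ against the single prior factor leaves $p(\etab)^{1-J}\prod_{j=1}^J p(\etab\mid\Y_j)$, which is the claim. The only point requiring genuine care---and the closest thing to an obstacle in an otherwise routine derivation---is the bookkeeping of proportionality constants: one must verify that every discarded factor ($p(\Y_{1:J})$ and each $p(\Y_j)$) is truly independent of $\etab$, so that a single overall normalizing constant in $\etab$ absorbs them all. The conditional-independence factorization is likewise the only place where the hierarchical structure enters, so making that assumption explicit is the substantive modeling step rather than an analytic difficulty.
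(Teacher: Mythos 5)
Your proof is correct and follows essentially the same route as the paper's: one application of Bayes' rule to the full dataset using the conditional independence of the groups given $\etab$, and one per-group application of Bayes' rule to trade each likelihood factor $p(\Y_j\mid\etab)$ for $p(\etab\mid\Y_j)/p(\etab)$, which produces the $p(\etab)^{1-J}$ correction. Your explicit justification of the factorization by marginalizing out the local parameters $\thetab_j$ is a small addition the paper leaves implicit, but it does not change the argument.
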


\begin{proof}
% The marginal likelihood contribution of the group $j$ is
% \begin{equation*}
%     p(\Y_j \mid \etab) = \int p(\Y_j \mid \thetab_j)\, p(\thetab_j \mid \etab) \,\diff\thetab_j.
% \end{equation*}
Using Bayes' rule and the conditional independence we get:
%Marginalizing over all local parameters:
\begin{equation*}
p(\etab \mid \Y_{1:J}) 
\propto p(\etab) \, p(\Y_{1:J} \mid \etab)
% &= p(\etab) \int p(\Y_{1.J} \mid \thetab_{1:J})\,p(\thetab_{1:J} \mid \etab) \,\diff\thetab_{1:J} \\
% &= p(\etab) \int \prod_{r=1}^R p(\Y_j \mid \thetab_j)\,p(\thetab_j \mid \etab) \,\diff\thetab_{1:J} \quad \text{(conditional independence)} \\
% &= p(\etab) \prod_{r=1}^R \int p(\Y_j \mid \thetab_j)\,p(\thetab_j \mid \etab) \,\diff \thetab_j \quad \text{(Tonelli)} \\
= p(\etab) \prod_{j=1}^J p(\Y_j \mid \etab).
\end{equation*}
%Tonelli's theorem is applicable since $p(\Y_j \mid \thetab_j)\,p(\thetab_j \mid \etab)\geq0$ and the product of two densities is measurable.

The posterior based on group $j$ alone is
\begin{equation*}
    p(\etab \mid \Y_j) \propto p(\etab)\, p(\Y_j \mid \etab)
\end{equation*}
using again Bayes' rule.
Therefore,
\begin{equation*}
    \prod_{j=1}^J p(\etab \mid \Y_j) \propto p(\etab)^J \prod_{j=1}^J p(\Y_j \mid \etab).
\end{equation*}

Combining results gives
\begin{equation*}
    p(\etab \mid \Y_{1:J}) \propto p(\etab) \prod_{j=1}^J p(\Y_j \mid \etab) = p(\etab)^{1-J} \cdot p(\etab)^J \prod_{j=1}^J p(\Y_j \mid \etab) \propto p(\etab)^{1-J} \prod_{j=1}^J p(\etab \mid \Y_j).
\end{equation*}
\end{proof}

\subsection{Mini-batch estimator is unbiased}
\label{sec:proof-unbiased}
\begin{proposition}
The mini-batch estimator
\begin{equation*}
    \hat{s}_{\psib}(\etab_t, \Y, \lambda_t) = (1 - J)(1 - t) \nabla_{\eta_tb}\!\log p(\etab_t) + \frac{J}{M} \sum_{j=1}^M s_{\psib}(\etab_t, \Y_{j}, \lambda_t)
\end{equation*}
with $M$ samples, where each sample $\Y_j$ is sampled uniformly from the set $\{ \Y_1, \ldots, \Y_J \}$, is an unbiased estimator of the full compositional score.
\end{proposition}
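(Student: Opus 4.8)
The plan is to prove unbiasedness by a direct linearity-of-expectation argument, treating the trained individual score networks $s_{\psib}$ as fixed functions and taking the expectation only over the random choice of mini-batch indices $j_1, \ldots, j_M$. The target quantity is the (approximate) compositional score obtained by substituting the learned single-group scores into Eq.~\ref{eq:bridging_density}, namely $(1-J)(1-t)\nabla_{\etab_t}\log p(\etab_t) + \sum_{j=1}^J s_{\psib}(\etab_t, \Y_j, \lambda_t)$.

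First I would observe that the prior term $(1-J)(1-t)\nabla_{\etab_t}\log p(\etab_t)$ appears in the estimator exactly, carries no randomness, and therefore contributes no bias; it can be set aside. All that remains is to handle the random sum $\frac{J}{M}\sum_{i=1}^M s_{\psib}(\etab_t, \Y_{j_i}, \lambda_t)$.

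Next I would compute the expectation of a single summand. Since each index $j_i$ is drawn uniformly from $\{1, \ldots, J\}$, its marginal law places mass $1/J$ on each group, so $\mathbb{E}[s_{\psib}(\etab_t, \Y_{j_i}, \lambda_t)] = \frac{1}{J}\sum_{j=1}^J s_{\psib}(\etab_t, \Y_j, \lambda_t)$ for every $i$. By linearity of expectation, summing over the $M$ draws and multiplying by $J/M$ yields $\frac{J}{M}\cdot M \cdot \frac{1}{J}\sum_{j=1}^J s_{\psib}(\etab_t, \Y_j, \lambda_t) = \sum_{j=1}^J s_{\psib}(\etab_t, \Y_j, \lambda_t)$, which is exactly the accumulated-score term. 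Adding back the deterministic prior term recovers the full compositional score, establishing the claim.

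There is no genuine obstacle here: the argument is a one-line consequence of linearity of expectation together with the uniform marginal of each index, and it does not even require the draws to be independent or sampled with replacement --- only that each $j_i$ be marginally uniform. The single point worth stating carefully is the object with respect to which unbiasedness is claimed: the estimator is unbiased for the compositional score \emph{built from the learned scores} $s_{\psib}$, not for the exact score $\nabla_{\etab_t}\log p_t(\etab_t \mid \Y_{1:J})$; any discrepancy between the two is network approximation error, which is orthogonal to the sampling unbiasedness proved here and is precisely the error that the damping schedule in Eq.~\ref{eq:stable_bridge} is designed to control.
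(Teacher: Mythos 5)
Your proposal is correct and follows essentially the same argument as the paper's proof: linearity of expectation combined with the uniform marginal of each sampled index, with the deterministic prior term added back at the end. Your added clarifications (that only marginal uniformity of each index is needed, and that unbiasedness is with respect to the compositional score built from the learned networks $s_{\psib}$ rather than the exact score) are accurate refinements but do not change the route.
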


\begin{proof}
By linearity of expectation, we have
\begin{equation*}
\mathbb{E} \left[ \frac{J}{M} \sum_{j=1}^M s_{\psib}(\etab_t, \Y_j, \lambda_t) \right]
= \frac{J}{M} \sum_{j=1}^M \mathbb{E}_{\Y_j} \left[ s_{\psib}(\etab_t, \Y_j, \lambda_t) \right].
\end{equation*}
Since each $\Y_j$ is sampled uniformly from $\{\Y_1, \ldots, \Y_J\}$,
\begin{equation*}
\mathbb{E}_{\Y_j} \left[ s_{\psib}(\etab_t, \Y_j, \lambda_t) \right] = \frac{1}{J} \sum_{j=1}^J s_{\psib}(\etab_t, \Y_j, \lambda_t),
\end{equation*}
so
\begin{equation*}
\mathbb{E} \left[ \frac{J}{M} \sum_{j=1}^M s_{\psib}(\etab_t, \Y_j, \lambda_t) \right]
= \frac{J}{M} \cdot M \cdot \frac{1}{J} \sum_{j=1}^J s_{\psib}(\etab_t, \Y_j, \lambda_t)
= \sum_{j=1}^J s_{\psib}(\etab_t, \Y_j, \lambda_t).
\end{equation*}
Adding the constant prior term $(1 - J)(1 - t) \nabla_{\etab}\!\log p(\etab_t)$ yields the full compositional score. Hence, the estimator is unbiased.
\end{proof}

\subsection{Evaluation metrics}  
\label{subsec:metrics}

All experiments were repeated 10 times, and the median and median absolute deviation from the following standard metrics are reported:

\paragraph{Root mean squared error (RMSE)}
RMSE measures the deviation between posterior samples and the ground-truth parameters. Given posterior samples $\hat{\thetab}^{(s)}_{ij}$ (local or global) for parameters $j$ in the dataset $i$, and true parameters $\thetab_{ij}$, the RMSE is defined as:
$$
\text{RMSE}_j = \sqrt{ \frac{1}{S} \sum_{s=1}^S \left( \hat{\thetab}^{(s)}_{ij} - \thetab_{ij} \right)^2 },
$$
aggregated over datasets via median and over the parameters $j$ via the mean.
We normalize RMSE by dividing by the empirical range of the ground-truth parameters.

\paragraph{Calibration error} 
Calibration Error measures how well the empirical coverage of posterior credible intervals matches their nominal level. For a level $\alpha \in [0.005, 0.995]$, we compute the $\alpha$-credible interval for each parameter and check whether the ground-truth value falls within it. Let $\operatorname{I}^\alpha_{ij}$ denote the indicator that the true value lies within the interval:
$$
\operatorname{CalibrationError}_j = \operatorname{median}_{\alpha} \left| \frac{1}{N} \sum_{i=1}^N \operatorname{I}^\alpha_{ij} - \alpha \right|,
$$
where aggregation is across a grid of $\alpha$ values. 
We calculated the mean calibration error over the parameters $j$.
This metric is sensitive to both over- and under-confidence in the posteriors.

\paragraph{Posterior contraction}  
We define posterior contraction as the relative reduction in variance from prior to posterior:
$$
\text{Contraction}_j = 1 - \frac{\mathrm{Var}_{\text{posterior}}(\thetab_j)}{\mathrm{Var}_{\text{prior}}(\thetab_j)},
$$
where values are clipped to $[0,1]$.
This reflects how much uncertainty has been reduced due to conditioning on the data, with values near 1 indicating strong learning.

\textbf{KL Divergence (Gaussian Case)}  
In the Gaussian toy example, where the true posterior is analytically tractable and Gaussian, we compute the KL divergence between the empirical posterior $q(\thetab)$ (estimated from samples) and the true Gaussian posterior $p(\thetab)$:
$$
\mathrm{KL}(q \,\|\, p) = \frac{1}{2} \left[ \log \frac{|\Sigmab_p|}{|\Sigmab_q|} - d + \mathrm{Tr}(\Sigmab_p^{-1} \Sigmab_q) + (\mub_q - \mub_p)^\top \Sigmab_p^{-1} (\mub_q - \mub_p) \right],
$$
where $\mub_q$, $\Sigmab_q$ are the empirical mean and covariance of posterior samples, and $\mub_p$, $\Sigmab_p$ are the parameters of the analytical posterior.

\paragraph{Posterior predictive $p$-value} The posterior predictive $p$-value evaluates how well the observed data are covered by the posterior predictive distribution. 
In a well-specified model, these $p$‑values are approximately uniform on $[0,1]$; thus, their expectations should be approximately 0.5. For $S$ posterior samples, let
\begin{align*} 
f_t(\thetab) &= \operatorname{median}\left(\{y_t^{\text{rep},(s)} \sim p(\Y \mid \thetab)  \}^S_{s=1} \right), \quad
\widehat{\operatorname{Var}}_t(\thetab) = \frac{1}{S-1}\sum_{s=1}^S\left(y_t^{\text{rep},(s)} - f_t(\thetab)\right)^2.
\end{align*}
For each posterior draw $\thetab^{(s)}$, define the discrepancy as
\begin{align*}
D\left(\y,\thetab\right) &= \sum_{t=1}^T
\frac{\left(y_t - f_t(\thetab)\right)^2}{\widehat{\operatorname{Var}}_t(\thetab)},
\end{align*}
and then posterior predictive $p$‑value is
\begin{align*}
p_\text{PPC} = \frac{1}{S}\sum_{s=1}^{S}
\mathbf{1}\left(D(\y^{\text{rep},(s)},\thetab) \ge D(\y_{\text{obs}},\thetab) \right).
\end{align*}

RMSE, calibration error, posterior contraction, and empirical CDFs plots are computed using the diagnostics provided in the \texttt{BayesFlow} toolbox \citep{bayesflow_2023_software}.

\subsection{Score model architectures \& training}
\label{sec:architectures}

\begin{itemize}
    \item \textbf{MLPs:} Fully connected networks with 5 hidden layers and 256 units per layer, using Mish activations.
    
    \item \textbf{Residual local conditioning:} Local networks receive a projection of the global latent variables and learn a residual update. Otherwise, global and local networks are simple MLPs.

    \item \textbf{Permutation-invariant aggregation:} To handle multiple condition sets or observations per group, we use a shallow permutation-invariant encoder architecture based on the Deep Set framework \cite{zaheer2017deep}:
    \begin{itemize}
        \item An encoder MLP with 4 layers of 128 hidden units and ReLU activations,
        \item Mean pooling over the set dimension to ensure permutation invariance,
        \item A decoder MLP with 3 hidden layers (each of size 128) and ReLU activations, projecting to the final output dimension.
    \end{itemize}
    
    \item \textbf{Time series summary network:} For structured input data such as time series (as in the FLI application), we use a hybrid convolutional–recurrent architecture. The model begins with a stack of 1D convolutional layers followed by a skipping recurrent path, as implemented in \citep{ZhangMik2023}:
    \begin{itemize}
        \item A standard recurrent path (bidirectional GRU with hidden size 256),
        \item A skip-convolution path, which downsamples the sequence via strided convolution and feeds the result into a parallel recurrent layer,
        \item Final representations from both paths are concatenated to produce a summary embedding, which are then projected by a linear layer to a fixed summary dimension of size 18.
    \end{itemize}
\end{itemize}

We parameterize our score models to predict the more stable velocity 
$\hat{\mathbf{v}}_{t} := \alpha_{t} \epsilonb - \sigma_{t}\thetab_{t}$, and then transform the output to noise $\hat{\epsilonb}_{t}$, as it has been shown that this parameterization is more stable for all $t$, whereas noise-prediction becomes harder for $t$ close to 0 where the signal increases and noise decreases \cite{salimans2022progressive}. 
Furthermore, we condition the score network on the signal-to-noise ratio (SNR), normalized to the interval $[-1, 1]$ similar to the preconditioning introduced in \citep{karras2022elucidating}.
The data and parameters were always standardized, and the prior scores were adjusted accordingly by multiplying them by the standard deviation of the parameters. 

\paragraph{Noise schedules}
We employed the following schedules:
\begin{itemize}
    \item \textbf{Cosine schedule} by \citet{nichol2021improved} (with $s{=}0$ during training)
\begin{equation*}
    \lambda(t) = - 2 \log (\tan (\pi t/2)) + 2s,
\end{equation*}
    \item and \textbf{Linear schedule} by \citet{HoJai2020a}
\begin{equation*}
    \lambda(t) = -\log (e^{t^2} -1).
\end{equation*}
%    \item and \textbf{EDM schedule} by \citet{karras2022elucidating} for training
%\begin{equation*}
%    \lambda(t) = \mathcal{F}_\mathcal{N}^{-1}(1-t; 2.4, 2.4^2)
%\end{equation*}
%and sampling
%\begin{equation*}
%    \lambda(t) = -2 \rho \log (\sigma^{1/\rho}_\text{max} + (1-t) (\sigma^{1/\rho}_\text{min} - \sigma^{1/\rho}_\text{max}))
%\end{equation*}
%with $\rho=7$, $\sigma_\text{min}=0.002$, $\sigma_\text{max}=80$, and $\sigma_\text{data}=1$.
\end{itemize}

All our noise schedules are truncated such that the log signal-to-noise ratio is $\lambda_t\in[-15, 15]$ to avoid instabilities in sampling, as detailed in \citep{KingmaGao2023}.
%For the EDM schedule, we set $\lambda_t\in[-\log \sigma^2_\text{max}, -\log \sigma^2_\text{min}]$ as in the original paper.

As the weighting function for the loss, we employed the likelihood weighting $w_{t}=g(t)^2/\sigma^2$ proposed by \citet{SongDur2021} for the linear and cosine schedules. % and the original EDM weighting $w_t = \exp(-\lambda_t) + 1$ for the EDM schedule \citep{karras2022elucidating}.

\paragraph{Training} We trained all models using AdamW with a cosine annealing learning rate schedule. The initial learning rate was set to $5 {\times} 10^{-4}$. The models were trained for 1000 epochs.
For the Gaussian toy example, in each epoch, we generated 10,000 new training samples on the fly, as simulations are inexpensive. 
For the AR(1) experiment, we used a fixed budget of 10,000 simulations, and for the FLI application, we used 30,000 samples per epoch, as we found that more training data was needed. 
For reference, training a single score estimator for the FLI task completes in 7.6 hours on a single GPU, while for the AR(1) model it takes 0.83 hours.

All models were trained on a high-performance computing cluster using an AMD EPYC "Milan" CPU (2.00 GHz), 100 GB DDR4 3200 MHz RAM, and an NVIDIA A40 GPU with 48 GB of memory. Each experiment required 1–2 days for all repeated runs on a high-performance computing infrastructure with up to 50 parallel jobs.

\paragraph{Sampling}  
For our experiments, we used the adaptive second-order sampler with maximal 10,000 network evaluations and the default settings proposed by \citet{Jolicoeur-MartineauLi2021}. Specifically, we set the absolute error tolerance to $e_\text{abs} = 0.002576$ and the relative tolerance to $e_\text{rel} = 0.1$.  
To solve the probability ODE, we used an Euler scheme.  
For annealed Langevin dynamics, we followed the setup from \citet{GeffnerPap2023}, using 5 Langevin steps per iteration, a maximum of 2000 iterations, and a step size factor of 0.1.
For \texttt{GAUSS}, we used the implementation provided by the \texttt{sbi} toolbox \citep{boelts2025sbi}, with the same diffusion model and training settings as those in our own implementation.

To determine the optimal damping factor $d_1$ and shift $s$ for a certain task, we ran Bayesian optimization with \texttt{optuna} \citep{optuna_2019}, using the sum of the average RMSE and expected calibration error as an optimization criterion.
We used search grids $s\in[0, 4]$ and $d_1\in[1\times10^{-5}, 0.1]$.
We chose this simple criterion because the hierarchical structure and shrinkage effects in our experiments encourage unimodal behavior by borrowing strength across observations. More expressive criteria can be used in cases where the posteriors exhibit multiple modes.
We also considered $d_0<1$ and found that this can sometimes improve RMSE and calibration.
If nothing else is stated, we use a mini-batch size of 10 in our experiments.

\paragraph{Code}
The software code and data for the experiments are available in a public GitHub repository:
\url{https://github.com/bayesflow-org/hierarchical-abi}.

\subsection{Experiment 1: Gaussian toy model}
\label{app:gaussian_toy}
The Gaussian toy model is defined as follows:
\begin{equation*}
    \Y_i  \sim \mathcal{N}(\etab \mid \sigma^2 \I)
\end{equation*}
with $\sigma=0.1$ and $\etab\in\mathbb{R}^{10}$.
We observe $\{\Y _j\}_{j=1}^J$ with varying $J$ and compute the posterior $p(\etab \mid\{\Y _i\}_{j=1}^J)$.
Given a normal prior for $\etab$,  $\etab\sim \mathcal{N}(\mathbf{0} \mid \sigma^2 \I)$, the posterior is also Gaussian, and we can calculate it analytically:
\begin{equation*}
p(\etab \mid \{\Y _j\}_{j=1}^J)
\propto
\exp\left(
-\tfrac12\,(\etab-\mub_J)^\top\Sigmab_J^{-1}(\etab-\mub_J)
\right),
\end{equation*}
where $\mub_J = \frac{1}{J+1}\sum_{j=1}^J \Y_j$ and $\Sigmab_J^{-1} = \frac{J+1}{\sigma^2}\I$.
Here, we did not employ a summary network.

\begin{figure}[!ht]
\centering
\begin{subfigure}{0.49\textwidth}
\centering
\includegraphics[width=\textwidth]{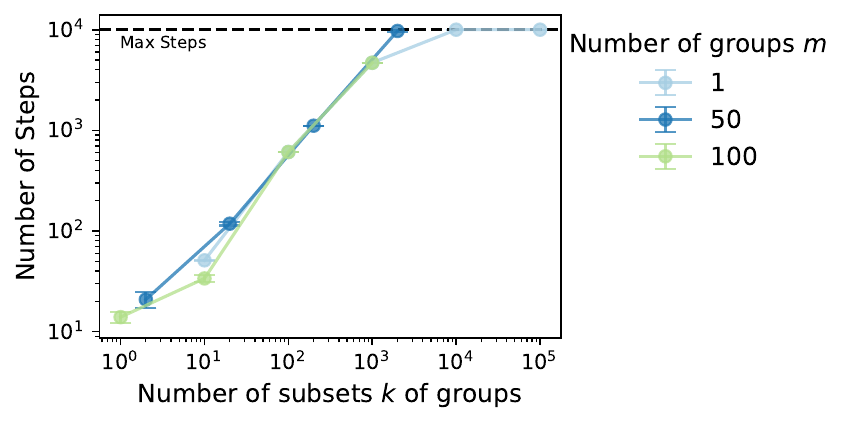}
\caption{Number of sampling steps.}
\end{subfigure}\,
\begin{subfigure}{0.49\textwidth}
\centering
\includegraphics[width=0.65\textwidth]{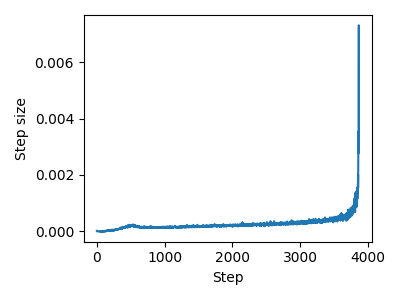}
\caption{Adaptive step size for 100 groups.}
\end{subfigure}
\caption{\emph{Assessing the adaptive sampling scheme for compositional inference in the toy model.} (a) Increasing numbers of sampling steps are needed for increasing number of subsets of groups. (b) The adaptive step size is adaptively increased towards the end of the sampling (low noise region).}
\label{fig:adapative_step_size}
\end{figure}

\begin{figure}[!ht]
\centering
\begin{subfigure}{\textwidth}
    \centering
\includegraphics[width=\textwidth]{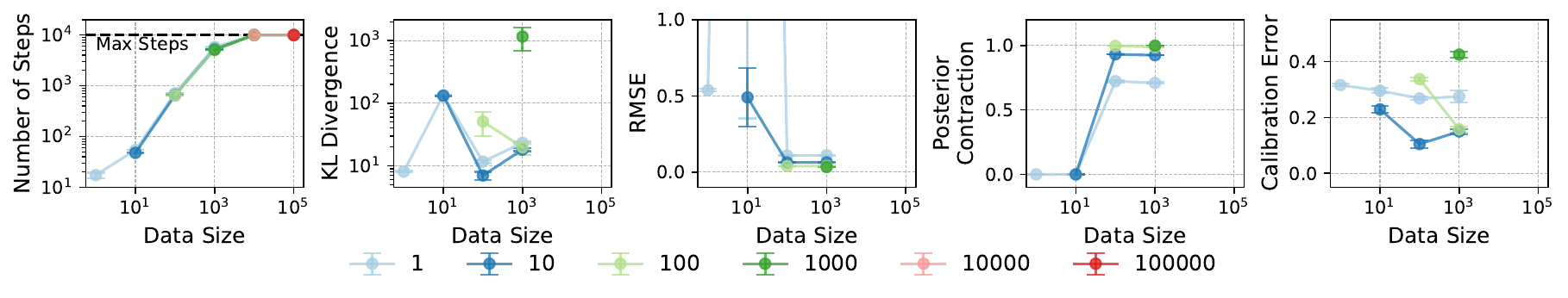}
    \caption{Varying mini-batch sizes.}
\end{subfigure}
\begin{subfigure}{\textwidth}
    \centering
\includegraphics[width=\textwidth]{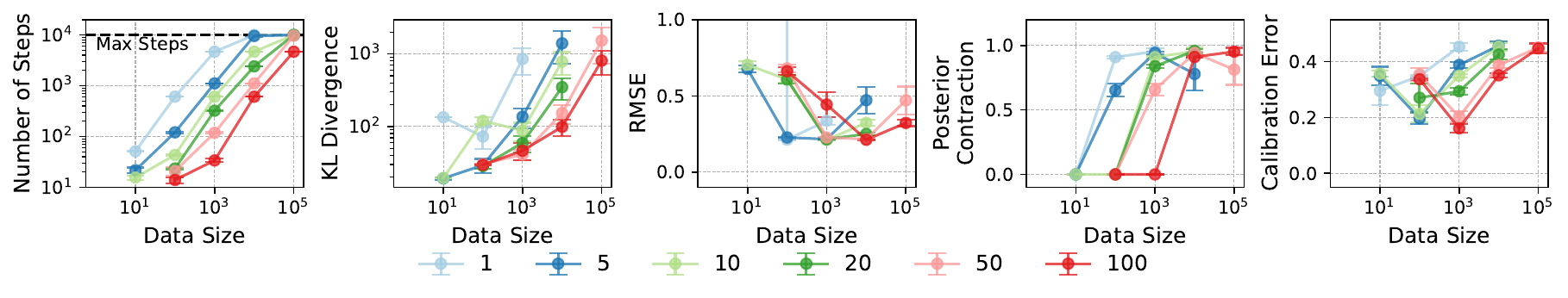}
    \caption{Varying number of subsets of groups during training (score model trained with a DeepSet as a second summary network).}
\end{subfigure}
% \caption{\emph{Toy Model.} Varying mini-batch size or varying number of conditions ($d(t)=1$, $s=0$).}
\caption{\emph{Evaluation of the error-damping estimator for the toy model.} Different evaluation metrics are shown for different mini-batch sizes or varying numbers of subsets of groups. For each experiment, 10 runs were performed. The median and median absolute deviation are reported, besides for those runs where none converged.
}
\label{fig:gaussian_addtional_experiemnts}
\end{figure}

\begin{figure}[!ht]
\centering
%\begin{subfigure}{\textwidth}
    \centering
    \includegraphics[width=\textwidth]{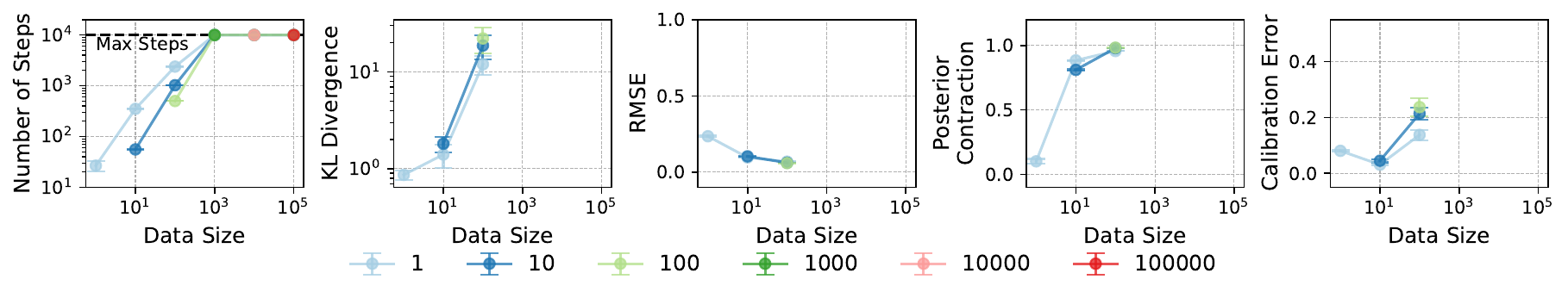}
 %   \caption{Linear Noise Schedule}
%\end{subfigure}
%\begin{subfigure}{\textwidth}
%    \centering
%    \includegraphics[width=\textwidth]{plots/flat_gaussian/edm/mini_batch.pdf}
%    \caption{EDM Noise Schedule}
%\end{subfigure}
\caption{\emph{Evaluation of the linear noise schedules for the toy model.} For each experiment, 10 runs were performed. The median and median absolute deviation is reported, besides for those runs where none converged.
%Both methods fail for already 1,000 groups, where the standard cosine schedule still converges.
}
\label{fig:gaussian_schedules}
\end{figure}

\clearpage
\subsection{Experiment 2: Hierarchical AR(1) model}
\label{app:ar1_model}
Our hyper-priors are defined as follows:
\begin{equation*}
\alpha \sim \mathcal{N}(0, 1), \quad
  \beta \sim \mathcal{N}(0, 1), \quad
  \log\sigma \sim \mathcal{N}(0, 1).
\end{equation*}
The local parameters are different for each grid point:
\begin{equation*}
\tilde\thetab_{j} \sim \mathcal{N}(0, \sigma\I), \qquad \thetab_{j} = 2\operatorname{sigmoid}(\beta + \tilde \thetab_{j})-1.
\end{equation*}
In each grid point $j$, we have a time series of $T=5$ observations, 
\begin{align*}
\Y_{j,0} &\sim \mathcal{N}(\mathbf{0}, 0.1\I) \\
\Y_{j,t} &\sim \mathcal{N}(\alpha + \thetab_{j}\Y_{j,t-1}, 0.1\I), \quad t = 1,\ldots T-1.
\end{align*}
On the local level, we perform inference on $\tilde\thetab$ and afterward transform $\tilde\thetab$ to $\thetab$ as NUTS \citep[as implemented in Stan][]{carpenter2017stan} performs better on non-centered parameterizations \citep{betancourt2015hamiltonian}.
We used 4 parallel chains, each generating 1,000 samples with default settings in Stan.
Here, we do not employ a summary network.

For the direct hierarchical ABI methods \citep{heinrich2023hierarchical, habermann2024amortized}, we employ
\begin{itemize}
    \item ABI-NF: Normalizing flow with 2 coupling layers using spline transformation \citep{durkan2019neural}, trained for 100 epochs,
    \item ABI-DM: Diffusion model of the same size as ours with velocity prediction, cosine schedule, trained on 1000 epochs and Euler-Maryuama sampling with 300 steps.
\end{itemize}

We generated 10,000 pairs of global and local parameters to train with our compositional approach.
This means, for $4\times 4$ grids, we use the equivalent of 625 full hierarchical datasets, for $16 \times 16$ grids, we use 40 full hierarchical datasets, and for $128 \times 128$ grids, we used not even 1 full hierarchical dataset.

\begin{figure}[!ht]
\centering
\begin{subfigure}{\textwidth}
    \centering
\includegraphics[width=\textwidth]{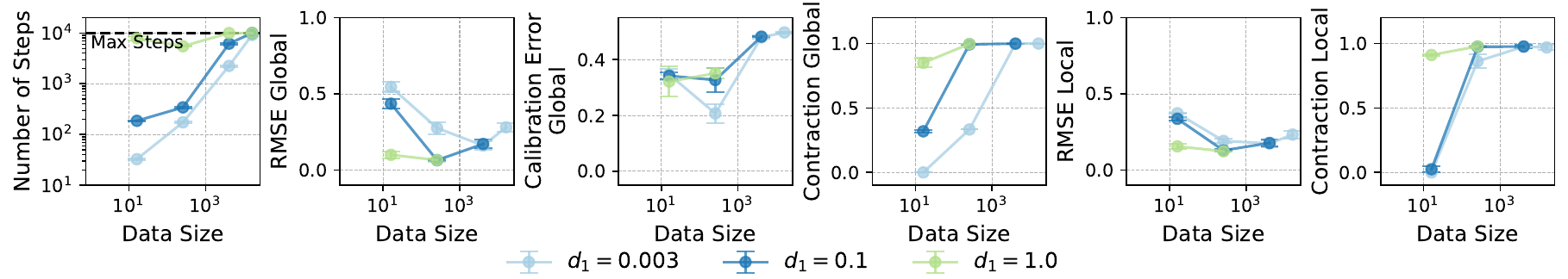}
\caption{
Varying damping factors $d_1$.
}
\end{subfigure}
\begin{subfigure}{\textwidth}
    \centering
\includegraphics[width=\textwidth]{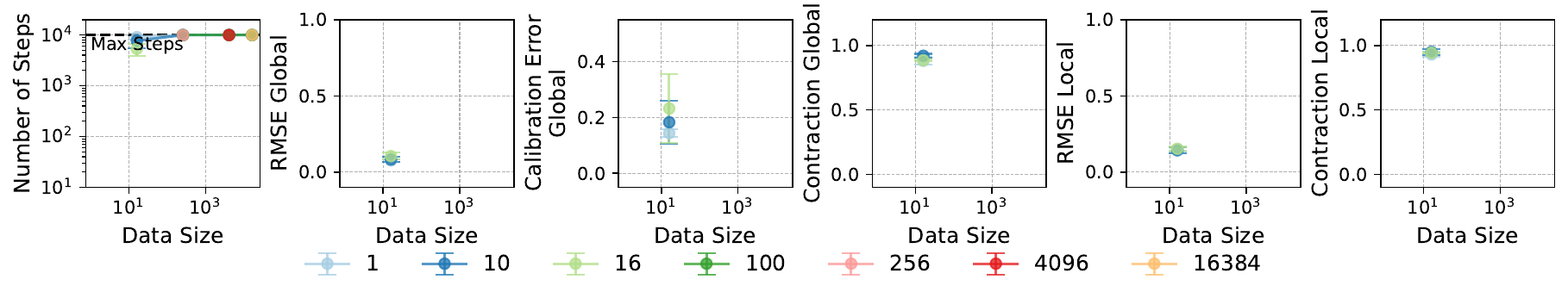}
    \caption{Varying mini-batch sizes. Convergence is achieved only for the smallest dataset.}
\end{subfigure}
\begin{subfigure}{\textwidth}
    \centering
\includegraphics[width=\textwidth]{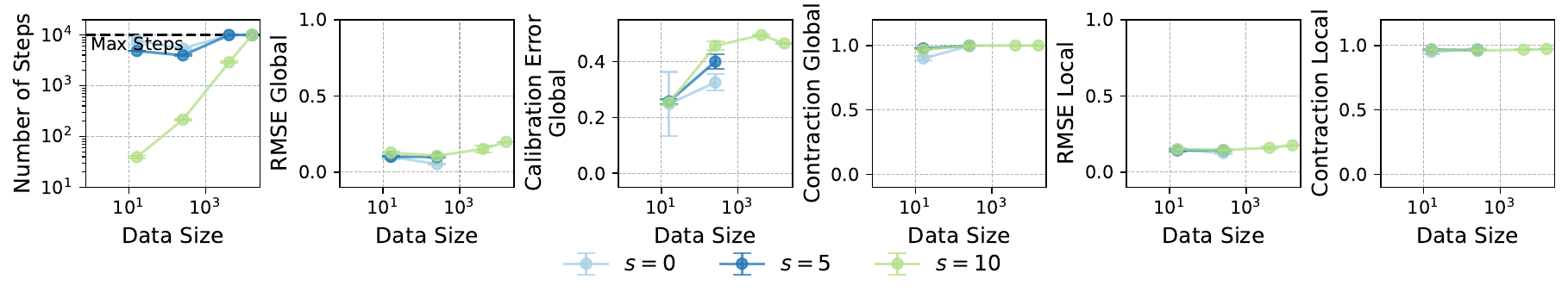}
    \caption{Varying cosine shifts $s$.}
\end{subfigure}
\begin{subfigure}{\textwidth}
    \centering
\includegraphics[width=\textwidth]{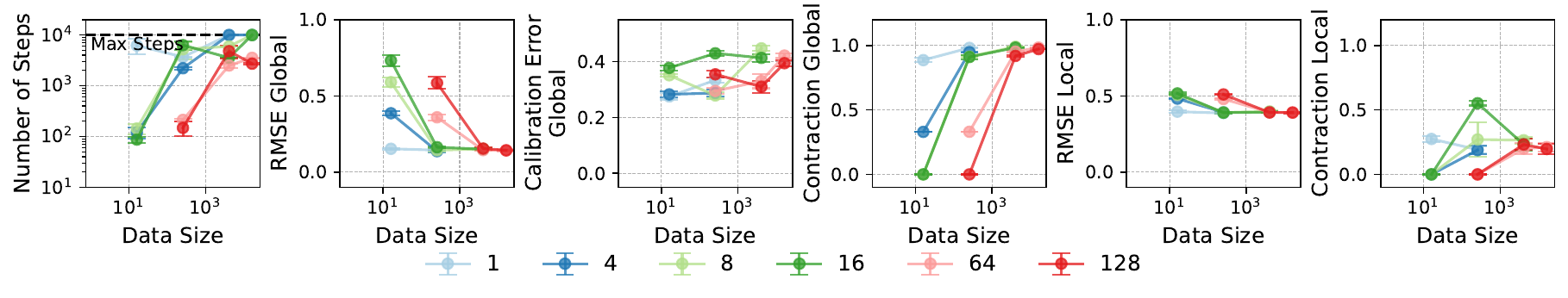}
    \caption{Varying number of subsets of groups during training (score trained with a Deep Set).}
\end{subfigure}
\caption{\emph{Evaluation of the  error-damping estimator for the hierarchical AR(1) model.} For each experiment, 10 runs were performed. The median and median absolute deviation is reported, besides for those runs where none converged. A mini-batch size of 10\% of the data was employed, and score models were trained on a single group.
}
\label{fig:ar1_experiemnts}
\end{figure}

\begin{table}[!ht]
    \centering
    \caption{Benchmarking against NUTS (gold-standard MCMC) and direct neural methods for the hierarchical AR(1) model. We show median and median absolute deviation over datasets for local parameters.}
    \begin{tabular}{llcc}
        \toprule
        Grid size & Method & RMSE & Contraction \\
        \midrule
        4x4 & NUTS & 0.08 (0.01) & 0.99 (0.01)\\
        & Ours  & 0.13 (0.01) & 0.97 (0.03)  \\
        \midrule
        16x16 & NUTS & 0.08 (0.01) & 0.99 (0.01)\\
        & Ours & 0.10 (0.02) & 0.98 (0.02) \\
        \midrule
        128x128 & NUTS & 0.01 (0.01) & 1.0 (0.00)  \\
        & Ours & 0.09 (0.05) & 1.0 (0.01)  \\
        \bottomrule
    \end{tabular}
    \label{tab:stan_comparison_local}
\end{table}

\clearpage
\subsection{Experiment 3: Fluorescence lifetime imaging (FLI) model}
\label{app:fli}

\paragraph{Model} The observed time-resolved fluorescence signal at each pixel is modeled using a bi-exponential function, following the work of \citet{pandey2024deep} and \citet{smith2019fast}.
This approach captures the fluorescence decay dynamics of individual fluorophores, accounting for both the fast and slow decay components associated with different molecular states.
By fitting a decay model, we can extract information about the characteristic lifetimes of the fluorophores, which is essential for studying molecular interactions and dynamics.
The time-dependent fluorescence signal is given as
\begin{equation}
\label{eq:decay_model}
y(t) = I \cdot \left[ A^L \, e^{-t/\tau_1^L} + (1 - A^L) \, e^{-t/\tau_2^L} \right] * \operatorname{IRF}(t) + \eta(t),
\end{equation}
where $\tau_1^L$ and $\tau_2^L$ are the fluorescence lifetimes and $A^L$ is a mixture parameter.
Here, $I \in [0, 1024]$ denotes the pixel intensity for 10-bit images, 
$\operatorname{IRF}(t)$ is the instrument response function, and $\eta(t)$ represents additive noise. The symbol $*$ denotes convolution.
For each simulation, we independently sampled a time series from the recorded IRF and system generated noise (which makes the likelihood intractable).
The maximal photon count in each time series was then normalized to 1. The real data were also normalized to 1 on a pixel-wise level.

\paragraph{Instrument response function (IRF)}
The emitted signals are recorded using multiple instruments (detectors, electronics, etc.) which have a characteristic response $E(t)$ to an instantaneous signal $\delta(t)$ (e.g., a single photon). The recorded signals from the $T$-periodic emitted signal can be written as a convolution of periodic $\delta_{0,T}$ and non-periodic $E(t)$:
\begin{equation}
\begin{aligned}
y_0(t) &= E(t) \ast \delta_{0,T}(t) \\
       &= E(t) \ast (x_{0,T} \ast F_{0,T}) \\
       &= (E(t) \ast x_{0,T}) \ast F_{0,T} \\
       &= \operatorname{IRF}_{0,T} \ast F_{0,T}.
\end{aligned}
\label{eq:IRF}
\end{equation}

Equation~\ref{eq:IRF} introduces the $T$-periodic instrument response function $\operatorname{IRF}_{0,T}$.
The IRF can be measured using the excitation signal from diffused white paper. The FLI experimental details in microscopy, mesoscopy, and macroscopy can be found in \cite{pandey2025real}.

The traditional methods of fitting these types of models are reviewed in \cite{torrado2024fluorescence}.

\paragraph{Priors} The prior distributions were designed with domain knowledge:
\begin{align*}
\tau^G_{1,\text{mean}} &\sim \mathcal{N}(\log(0.2), 0.7^2), &
\tau^G_{1,\text{std}} &\sim \mathcal{N}(-1, 0.1^2),\\
\Delta\tau^G_{\text{mean}} &\sim \mathcal{N}(\log(1), 0.5^2), &
\Delta\tau^G_{\text{std}} &\sim \mathcal{N}(-2, 0.1^2), \\
a^G_{\text{mean}} &\sim \mathcal{N}(0.4, 1^2), &
a^G_{\text{std}} &\sim \mathcal{N}(-1, 0.5^2).
\end{align*}
Local parameters are then sampled from the corresponding global means and standard deviations, as follows:
\begin{align*}
\tau^L_{1,j} \sim \mathcal{N}(\tau^G_{1,\text{mean}}, (\tau^G_{1,\text{std}})^2), \quad
\Delta\tau^L_j \sim \mathcal{N}(\Delta\tau^G_{\text{mean}}, (\Delta\tau^G_{\text{std}})^2), \quad
a^L_j \sim \mathcal{N}(a^G_{\text{mean}}, (a^G_{\text{std}})^2).
\end{align*}
The local parameters can then be converted to linear scale for simulation:
\begin{align*}
\tau_1^L = \exp(\log \tau_1), \quad
\tau_2^L = \tau + \exp(\log \Delta\tau), \quad
A^L = \frac{1}{1 + \exp(-a)}.
\end{align*}
This ensures that $\tau_2>\tau_1$ on both the global and local levels, and that the mixture fulfills $A\in[0,1]$.
Additionally, we can compute the average lifetime $\tau_{\text{mean}} = A\tau_1 + (1-A) \tau_2$.

Here, we employed a time-series summary network. For comparison, we also trained a diffusion model of the same size as ours on the flat model using the same prior and simulation budget, but only targeting the local per pixel parameters without conditioning on global parameters.

\paragraph{Data} AU565 (HER2+ human breast carcinoma) cells, incubated for 24h with 20 $\mu$g/mL TZM-Alexa Fluor 700 (Donor, D) and 40 $\mu$g/mL TZM-Alexa Fluor 750 (Acceptor, A), were imaged using Förster resonance energy transfer (FRET) microscopy to quantify trastuzumab (TZM) binding. AU565 cells exhibit relative low level of HER2 heterodimerization that correlate with reduced TZM uptake and sensitivity, which is also influenced by culture conditions (2D vs. 3D). FLI-FRET analysis allows for the quantification of these dimerization-dependent variations in live cells by assessing the proximity of the donor and acceptor-labeled TZM.

\begin{figure}
\centering
\begin{subfigure}{\textwidth}
    \centering
    \includegraphics[width=\textwidth]{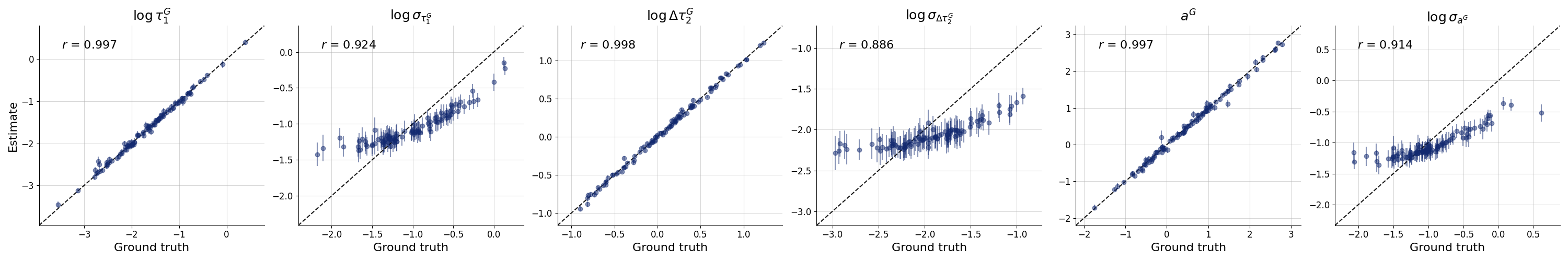}
    \caption{Recovery of global parameters with hierarchical score based approach (medians and median absolute deviation of the posterior samples).}
\end{subfigure}
\begin{subfigure}{\textwidth}
    \centering
    \includegraphics[width=\textwidth]{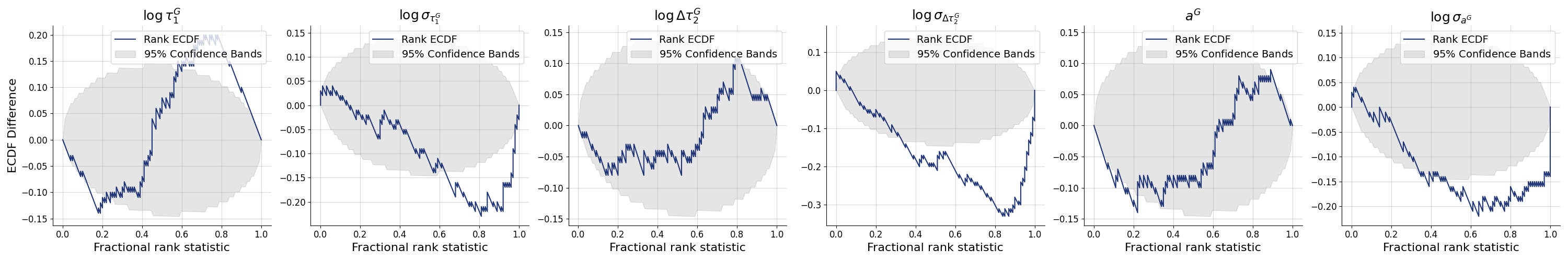}
    \caption{Global posterior calibration assessed with simulation-based calibration diagnostics.}
\end{subfigure}
\caption{
\emph{Assessing inference of global parameters for the FLI model.} Synthetic data on a $32{\times}32$ grid was generated.
}
\label{fig:fli_simulations}
\end{figure}

\begin{figure}
\centering
\begin{subfigure}{\textwidth}
    \centering
    \includegraphics[width=0.8\textwidth]{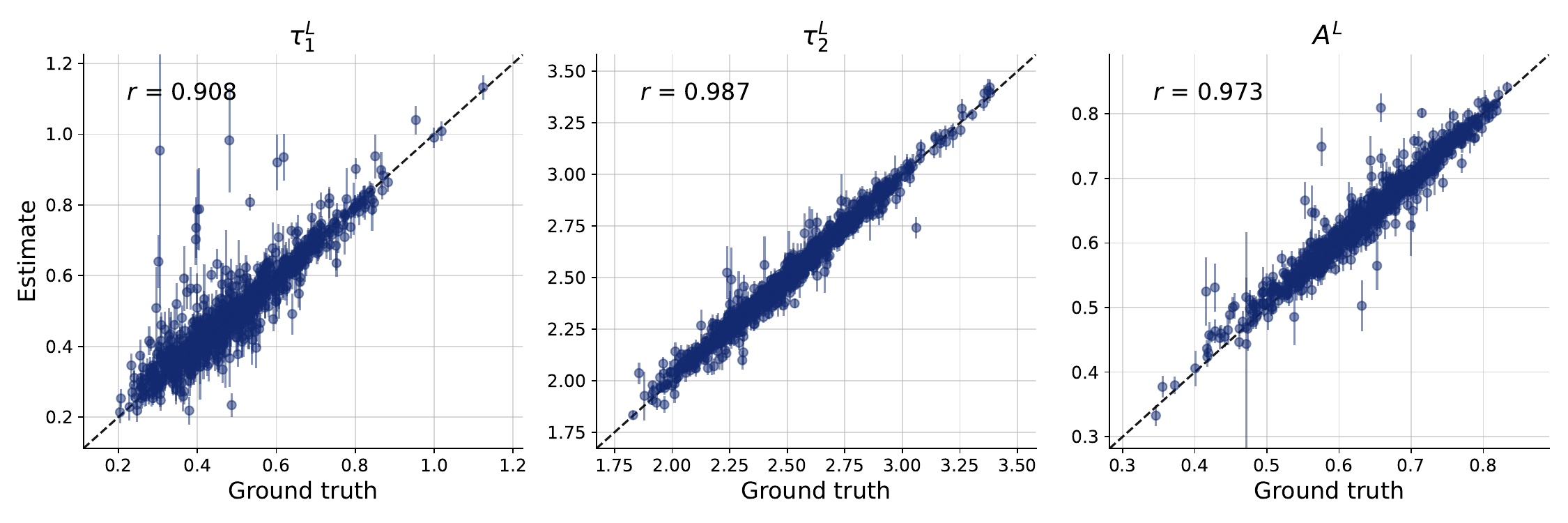}
    \caption{Recovery of transformed local parameters for one $32{\times}32$ grid with hierarchical score based approach (medians and median absolute deviation of the posterior samples).
    Deviations from the ground truth can be due to the expected shrinkage of the local posteriors.}
\end{subfigure}
\begin{subfigure}{\textwidth}
    \centering
    \includegraphics[width=0.8\textwidth]{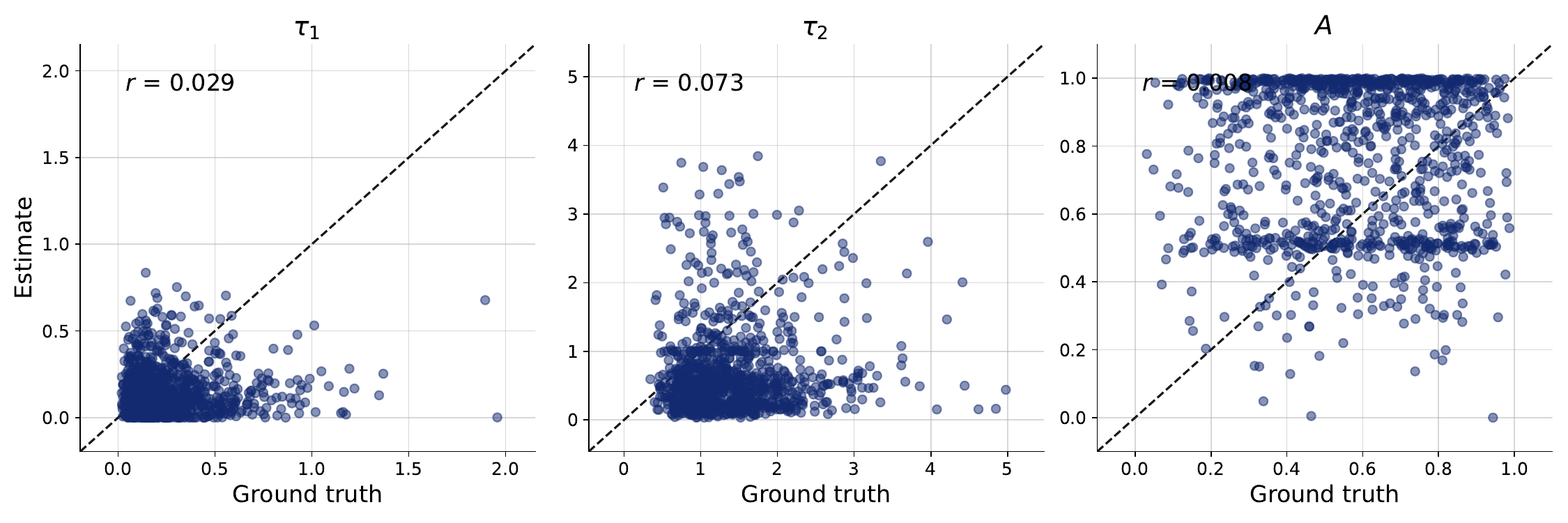}
    \caption{Recovery of transformed local parameters with MLE.}
\end{subfigure}
\caption{
\emph{Assessing inference of local parameters for the FLI model.} Synthetic data on a $32{\times}32$ grid was generated. We compared our hierarchical approach against the standard non-hierarchical pixel-wise MLE.
}
\label{fig:fli_simulations2}
\end{figure}

\begin{figure}
\centering
\begin{subfigure}{0.32\textwidth}
    \centering
   \includegraphics[width=\textwidth]{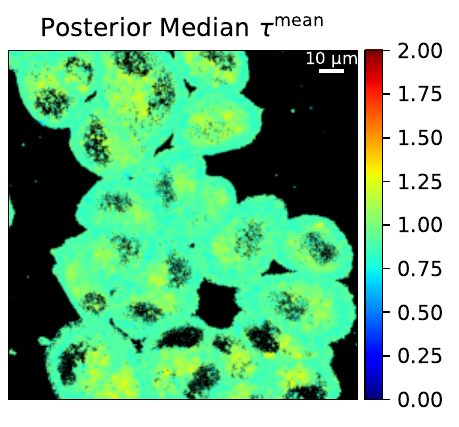}
    \caption{Hierarchical $\tau_\text{mean}$ estimates.}
\end{subfigure}\,
\begin{subfigure}{0.32\textwidth}
    \centering
    \includegraphics[width=\textwidth]{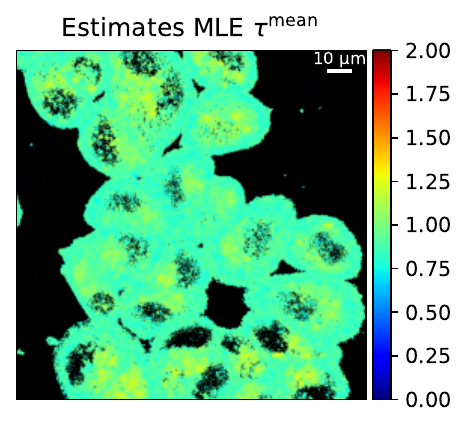}
    \caption{MLE $\tau_\text{mean}$ estimates.}
\end{subfigure}\,
\begin{subfigure}{0.32\textwidth}
    \centering
    \includegraphics[width=\textwidth]{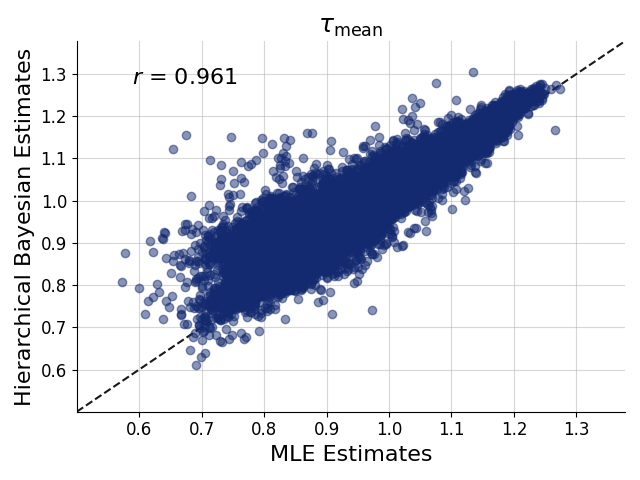}
    \caption{Correlation of $\tau_\text{mean}$ estimates.}
\end{subfigure}
\begin{subfigure}{\linewidth}
    \centering
    \includegraphics[width=0.95\linewidth]{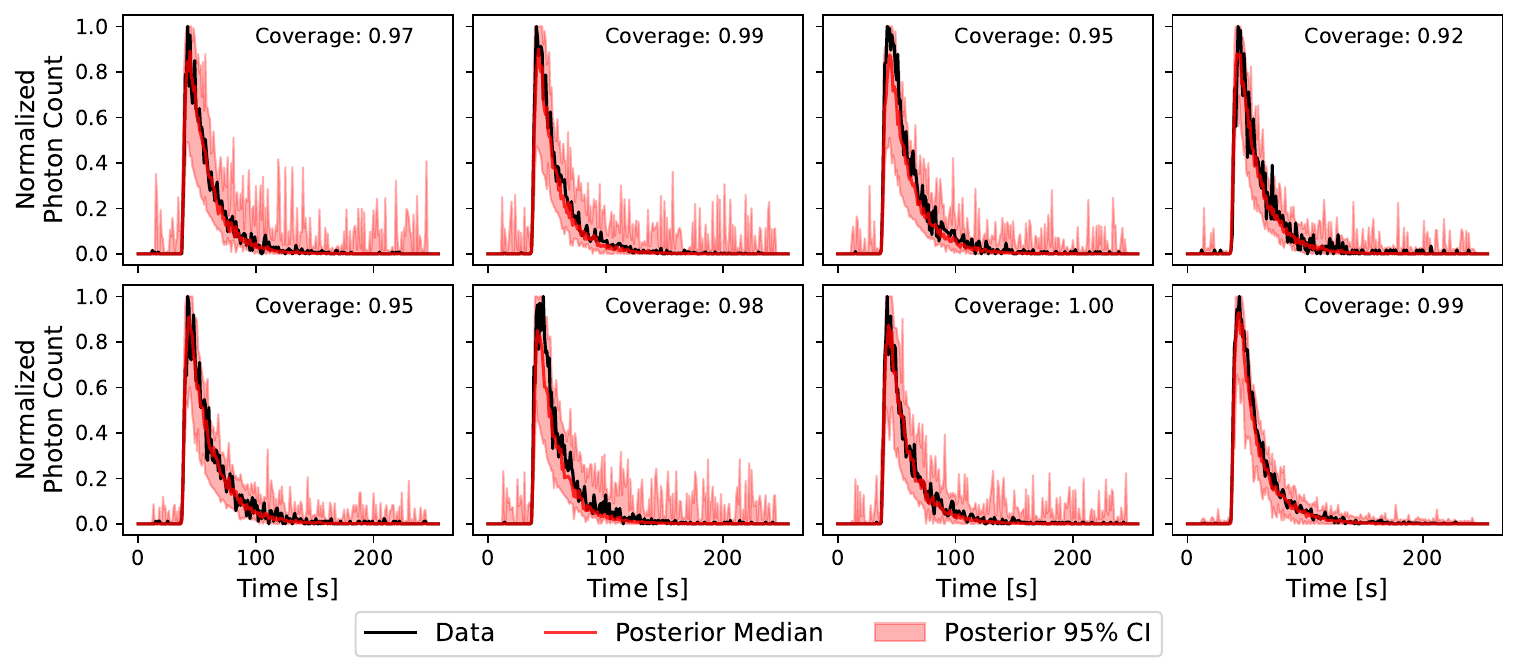}
    \caption{Local simulations from the hierarchical posterior.}
\end{subfigure}
\begin{subfigure}{\linewidth}
    \centering
    \includegraphics[width=0.35\linewidth]{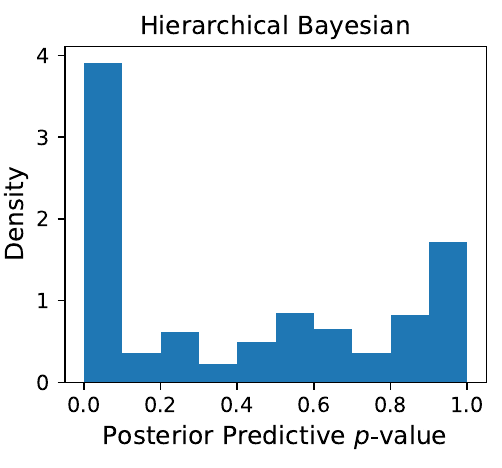}
    \caption{Quality of posterior prediction measured by Bayesian $p$-value.}
\end{subfigure}
\caption{
\emph{Assessing inference of local parameters for the FLI model on real data.} We compared our hierarchical approach with the standard non-hierarchical pixel-wise MLE. Owing to the low photon count, the average lifetime $\tau^{\text{mean}}$ is the most reliable quantity for this non-hierarchical method. Furthermore, we show additional random simulations from the hierarchical posterior (median and $95\%$ confidence region out of 100 simulations) and a quantitative evaluation of the posterior predictive quality.
}
\label{fig:mle_estimate_real}
\end{figure}

\begin{figure}
\centering
    \centering
    \includegraphics[width=\textwidth]{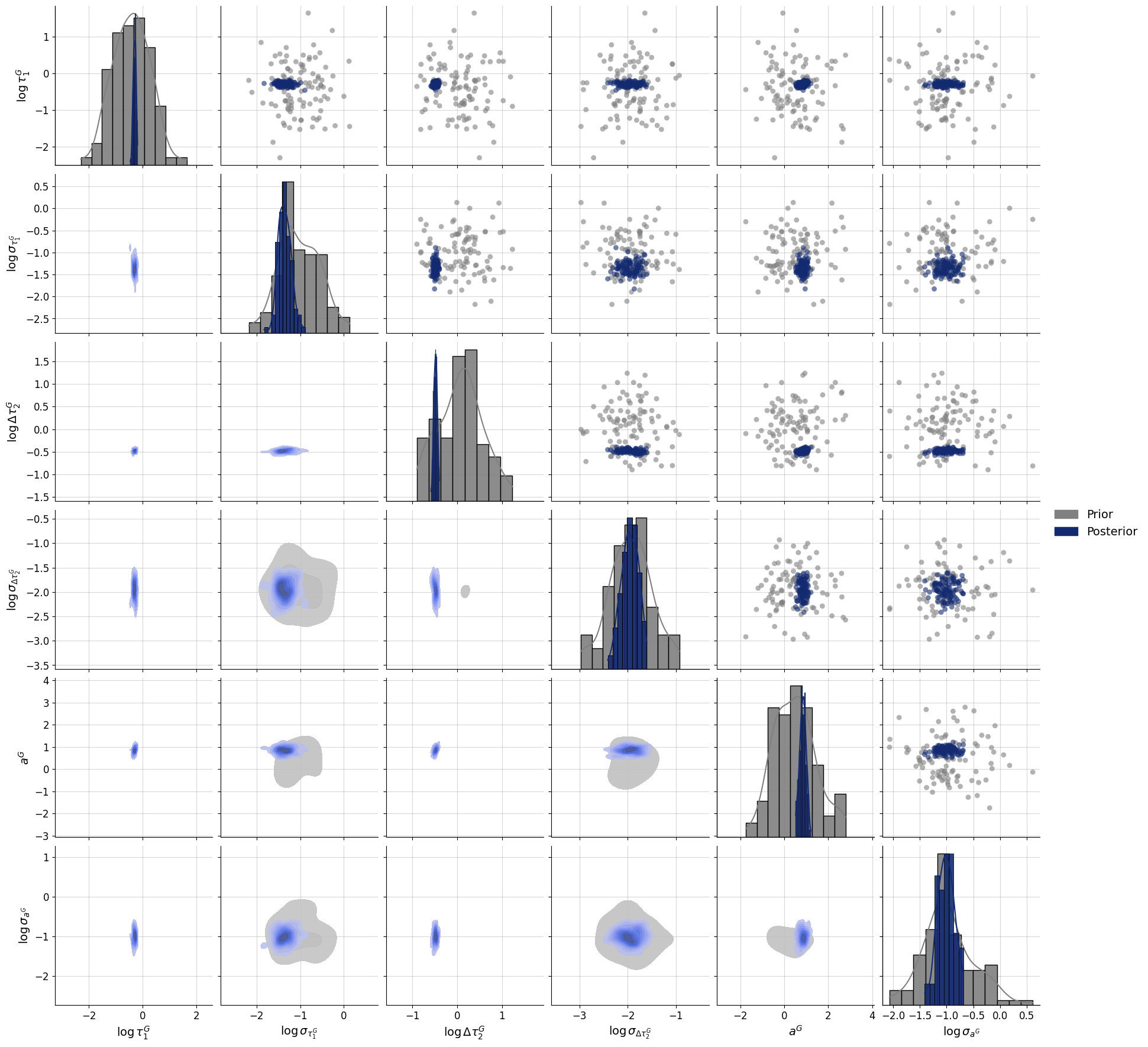} 
\caption{\emph{Global posteriors for the real FLI data.}}
\label{fig:fli_global_posteriors}
\end{figure}

\end{document}